\RequirePackage[T1]{fontenc} \documentclass[journal]{IEEEtran} \usepackage{amsmath,amssymb,amsthm,bm} \usepackage{graphicx,xcolor,booktabs} \usepackage{cite} \usepackage[hidelinks]{hyperref} \newtheorem{theorem}{Theorem}  \newtheorem{corollary}[theorem]{Corollary} \newtheorem{lemma}[theorem]{Lemma} \theoremstyle{remark} \newtheorem*{example}{Example} \newcommand{\R}{\mathbb R} \newcommand{\C}{\mathbb C} \newcommand{\tr}{\operatorname{tr}} \newcommand{\rank}{\operatorname{rank}} \newcommand{\diag}{\operatorname{diag}}    \newcommand{\smin}{\sigma_{\min}}  \newcommand{\eps}{\varepsilon} \newcommand{\cD}{\mathcal D} \newcommand{\marginrec}{r_{\mathrm{rec}}} \newcommand{\appref}[1]{Appendix~\ref{#1}} \definecolor{paperblue}{HTML}{0080FF} \definecolor{paperred}{HTML}{FF6666} \hbadness=10000\vbadness=10000\relax\begin{document} \title{Minimal Experiments for Robust Stabilization: Information, Spectral Geometry, and Duration} \author{Alexey~Peregudin and Ngoc~Tuan~Dinh\thanks{Corresponding author: Alexey Peregudin.}\thanks{Alexey Peregudin is with the School of Electrical and Electronic Engineering, University of Sheffield, Sheffield, United Kingdom (e-mail: a.peregudin@sheffield.ac.uk).}\thanks{Ngoc Tuan Dinh is with ITMO University, St. Petersburg, Russia (e-mail: dinhngoctuan6789@gmail.com).}} \maketitle\par
\par
\begin{abstract} On broad classes of linear systems, the shortest experiments are almost as good as the best possible ones. For $n$ states and $m$ inputs, the shortest input sequences that support robust data-driven stabilization of every controllable plant have $mn+1$ steps with exact states and $m(n+1)$ with noisy states. We show that, when the spectral radius is bounded and the spectrum is well separated near the unit circle, these sequences tolerate a fixed fraction of the error level achievable by any experiment, even one designed with full plant knowledge and allowed to use any finite duration. This constant-factor comparison can fail for slowly actuated systems. For $A=I+hG$ with controllability depth $\nu\ge2$, short experiments lose a factor of order $h^{\nu-1}$, and duration of order $1/h$ is both necessary and sufficient to recover a fixed fraction of the optimal tolerance. \end{abstract} \begin{IEEEkeywords} Data-driven control, robust stabilization, experiment design, informativity. \end{IEEEkeywords} \section{Introduction}\label{sec:intro} Choosing an experiment for control requires more than collecting enough data to compute a feedback gain. A short record may contain all the information needed for stabilization while remaining sensitive to small errors. Additional samples can reveal missing state directions or allow weak responses to become more pronounced. Whether they improve robustness is therefore a quantitative question: knowing that an experiment works for sufficiently small errors does not tell us how much error it can tolerate. \par
We study experiments that start from a known zero state and use bounded inputs. Process and measurement errors are bounded uncertainties, without a probability model. Their bounds are normalized by the square root of duration, giving short and long experiments a common scale for comparing noise tolerance. \par
We measure an experiment by the error level it tolerates on \emph{every} possible data record: each record must provide one static state-feedback gain and one quadratic Lyapunov function valid for all models consistent with it. We compare experiments with a benchmark, the \emph{causal oracle}. The oracle knows the true plant and observes exact states; it fixes the duration before collecting data and chooses each input from the observations collected so far. Its records must meet the same requirement. A short experiment is \emph{competitive} if its error tolerance is at least a fixed positive fraction of the oracle's, with the same fraction valid throughout the stated plant class. The comparison thus measures what plant knowledge, feedback during data collection, and additional time are worth for this control task. \par
\subsection{Principal results} \par
The main conclusions are as follows. \par
\begin{itemize} \item\emph{Minimum information.} The exact universal minimum duration for predetermined inputs is $mn+1$ transitions with exact states and $m(n+1)$ with any positive measurement uncertainty. The increase comes from a stronger rank requirement induced by sensor error. \par
\item\emph{Minimum duration is often enough.} Under bounded spectral radius and suitable spectral separation near the unit circle, the shortest universal experiments achieve a constant fraction of the robustness of the best plant-informed causal experiment, even though the latter observes exact states and may use any finite duration. No matrix-norm or eigenvector-conditioning bound is required. \par
\item\emph{But some plants need time, not merely information.} For $A=I+hG$ with controllability depth $\nu\ge2$, the shortest-experiment tolerance scales as $h^{\nu-1}$ while the benchmark remains order one. Constant-factor competitiveness requires and can be achieved with duration of order $1/h$. \end{itemize} \par
The last result separates weak controllability from slow controllability: in some families the same small controllability scale limits every experiment, whereas in others it only reflects that the shortest experiment ends before the relevant dynamics have had time to develop. \par
\subsection{Related work and scope} For controllable systems, the fundamental lemma represents all trajectories of a prescribed finite length using one trajectory with sufficiently rich inputs~\cite{willems2005}. This supports data-based feedback design~\cite{depersis2020}. The informativity framework makes the set of models consistent with the data explicit and shows that stabilization need not require identification~\cite{informativity2020}. For a known zero initial state, our exact-state result reduces the stabilization criterion of~\cite{informativity2020} to the rank condition $\rank X=n$ and adds robustness to small process errors. For error-free data, \cite{fragility2026} links a failure of identification to vanishing robustness against perturbations of the implemented gain. In our setting, sensor errors instead change the possible underlying trajectories, and hence the set of models that the gain must stabilize. Even so, our rank condition does not require identifying input directions that the experiment never uses. \par
Feedback gains and common quadratic Lyapunov functions can be computed from noisy data using the matrix S-lemma~\cite{slemma2022}, matrix ellipsoids and Petersen's lemma~\cite{petersen2022}, and quadratic matrix inequalities~\cite{qmi2023}. Measurement errors have been treated by linear matrix inequalities~\cite{bisoffi2024} and polynomial optimization~\cite{eiv2025}. General data-perturbation formulations also provide certificates for combined and temporally structured errors~\cite{perturbation2025}. We use these tools to connect data errors with what an experiment can guarantee and to compute the required certificates. A common stabilizing gain need not admit a common quadratic Lyapunov function~\cite{nonquadratic2022}; requiring such a function is a restriction of our criterion. The proofs also use standard robust-control results~\cite{polik2007,boyd1994}. \par
Online experiment design~\cite{online2022} and the shortest input--output identification experiments of~\cite{shortest2025} choose inputs using previous observations. Adaptive partial identification can also suffice for error-free stabilization under a stabilizability prior~\cite{fast2023}. Universal-input results relate persistent excitation to plant-independent guarantees~\cite{universal2025}; prior knowledge can alter identification and stabilization requirements~\cite{prior2025}. Sparse pulse inputs have been used for persistent excitation~\cite{impulses2023} and stabilizing exploration~\cite{blackbox2021}. Our measured-state sequence uses the pulse spacing of~\cite{impulses2023}, with its initial zero segment omitted because the state starts at zero. Our exact minimum lengths concern inputs fixed before the experiment, a known zero initial state, and positive error tolerance on every controllable plant. For multiple inputs, shorter experiments that adapt to observations are not ruled out. The causal oracle is the robustness benchmark, not the subject of these minimum-length claims. \par
Quantitative notions of persistency of excitation bound the singular values of the data~\cite{quantitative2023}. Bounded-error experiment design studies the geometry of the set of parameters consistent with the data~\cite{bounded1989} and worst-case identification error under input constraints~\cite{worstcase2014}; recent universal designs address finite-horizon identification with general prior knowledge~\cite{setmembership2026}. Targeted exploration with energy-bounded disturbances seeks prescribed parameter accuracy~\cite{nonstochastic2026}. Stochastic task-dependent exploration and learning-to-control hardness results provide complementary perspectives~\cite{targeted2021,hard2022}. We compare experiments by the error level at which every record remains certifiable, with bounded inputs and operator-norm error bounds on the whole record, normalized by the square root of the duration. Short and long experiments are compared on the same true plant against a plant-informed causal benchmark. Spectral and controllability assumptions constrain the true plant, not the compatible alternatives. \par
The paper proceeds from the information needed for positive robustness, to the shortest universal experiments, to their competitiveness, and finally to the situations that force a longer duration. Section~\ref{sec:model} defines the criterion; Section~\ref{sec:experiments} establishes the information and minimum-duration results; Section~\ref{sec:spectral} gives the record certificates and the comparison with the oracle; Section~\ref{sec:duration} proves the duration law; and Section~\ref{sec:examples} contrasts slow actuation with a weak link and with nonnormality. Section~\ref{sec:conclusions} concludes. Brief proof ideas accompany the main results; complete proofs are in the Appendix. \par
\subsection{Notation} We use $\R$ and $\C$ for the real and complex fields, $I$ for an identity matrix of the required size, and $e_j$ for a coordinate vector. The symbols $M^\top$, $M^*$, and $M^\dagger$ denote the transpose, conjugate transpose, and Moore--Penrose inverse. Matrix norms are operator norms unless marked by $F$ (Frobenius); vector norms are Euclidean unless marked by $\infty$. We use the least row singular value $\smin(M):=\sqrt{\lambda_{\min}(MM^*)}$, which is zero when the rows are dependent, and write $\rho(M)$ for spectral radius. The notation $[M;N]$ means vertical stacking, $\otimes$ denotes the Kronecker product, and $\operatorname{im}M$ and $\operatorname{row}M$ denote column and row spaces. For real matrices, $P\succ0$ ($P\succeq0$) means symmetric positive (semi)definiteness; for complex matrices, the Hermitian convention applies. In a stated limit, for nonnegative $f$ and positive $g$, $f\asymp g$ means that $f/g$ is bounded above and below by positive constants; $O(g)$ and $\Omega(g)$ denote upper and lower bounds up to a positive constant. For vector or matrix quantities, $O(g)$ bounds the norm. Constants may depend on the fixed parameters specified in each statement, but not on the parameter tending to its limit. \par
For trajectory data, $X,Y$ are the true current and successor state blocks; they also denote arbitrary data matrices in the algebraic record tests. Tildes mark measured quantities; on state and data blocks, a superscript $0$ marks error-free (ideal) quantities. Hats mark least-squares fits, and primes mark a compatible alternative and its states and errors. Bars on states, system matrices, and data denote coordinates normalized for analysis; a subscript $c$ denotes coordinates computed from observations. On complex scalars, a bar denotes complex conjugation. \section{Problem and robustness criterion}\label{sec:model} Consider the real $n$-state, $m$-input plant \begin{equation}\label{eq:system} x_{t+1}=Ax_t+Bu_t+w_t,\quad\widetilde x_t=x_t+v_t,\quad x_0=0. \end{equation} The initial state is known exactly. An experiment has $T$ transitions and satisfies $\|u_t\|_\infty\le1$. Write $X=[x_0\ \cdots\ x_{T-1}]$, $Y=[x_1\ \cdots\ x_T]$, $U=[u_0\ \cdots\ u_{T-1}]$, $W=[w_0\ \cdots\ w_{T-1}]$, and $V=[v_0\ \cdots\ v_T]$. The measured blocks are $\widetilde X=X+V_-$ and $\widetilde Y=Y+V_+$, where $V_-$ and $V_+$ are the first and the last $T$ columns of $V$. Their shared columns come from the same sensor-error array. The separate error bounds \begin{equation}\label{eq:budgets} \|W\|\le\delta_w,\quad\|V\|\le\delta_v \end{equation} apply to the full error arrays. The state is not reset during the experiment, and no probability model or bound on the state magnitude is assumed. \par
Write $\cD=(\widetilde X,\widetilde Y,U)$ for the observed record. Its \emph{compatibility set} $\Sigma_{\delta_w,\delta_v}(\cD)$ consists of all real pairs $(A',B')$ that explain the record through some states $x'_0=0,x'_1,\ldots,x'_T$ and error arrays $W',V'$ satisfying the bounds \eqref{eq:budgets}: the dynamics $x'_{t+1}=A'x'_t+B'u_t+w'_t$ hold for $0\le t<T$, and $\widetilde x_t=x'_t+v'_t$ for $0\le t\le T$. With exact states write $\Sigma_{\delta_w}(X,Y,U):=\Sigma_{\delta_w,0}(X,Y,U)$; explicitly, \begin{equation}\label{eq:consistency} \Sigma_{\delta_w}(X,Y,U) =\{(A',B'):\|Y-A'X-B'U\|\le\delta_w\}. \end{equation} We also use the right-hand side of \eqref{eq:consistency} for arbitrary matrices $X,Y,U$ that need not come from one trajectory; fitted records require this algebraic extension. Alternative models range over all pairs of the prescribed dimensions: assumptions on the true plant are used to assess experiments, not to restrict this set. \par
A record is \emph{certifiable} if its compatibility set is nonempty and there are a real gain $K$ and $P\succ0$ such that \begin{equation}\label{eq:cq} (A'+B'K)^\top P(A'+B'K)-P\prec0 \end{equation} for every compatible pair; we then call $(K,P)$ a \emph{certificate} for the record. The certificate may depend on the observed record and the error bounds, but the same $K$ and $P$ must work for every compatible model. The input bound and the sensor errors concern data collection only: the feedback $u=Kx$ applied after the experiment uses the state and is not subject to saturation. \par
\subsection{Experiment tolerance and the causal oracle} A \emph{predetermined input sequence} is fixed before data collection. A deterministic causal policy $\pi$ may instead choose $u_t$ from observations available through time $t$. The duration $T$ is chosen before observations are collected. Fix a finite error ratio $\kappa\ge0$ and compare experiments using the error bounds \begin{equation}\label{eq:budget-ray} \delta_w=\eps\sqrt T,\quad\delta_v=\kappa\eps\sqrt T. \end{equation} Thus $\kappa=0$ means exact states, while $\eps$ is an error level normalized by the square root of duration. At a fixed $\eps$, the error bounds on the whole record grow as $\sqrt T$; short and long experiments use this same normalization. For example, if $\|w_t\|\le\eps$ at every step, then $\|W\|\le\|W\|_F\le\eps\sqrt T$. A fixed per-step process-error level therefore remains admissible at every duration. \par
A \emph{physical record} is one generated by the true plant $(A,B)$ and the policy $\pi$, with errors satisfying \eqref{eq:budgets} and \eqref{eq:budget-ray}. A level $\eps$ is \emph{certified} if every physical record generated at that level is certifiable; explicitly, \begin{equation}\label{eq:worst-record} \begin{aligned} &\forall\ \text{admissible physical records }\cD\quad\exists\ K_{\cD},\ P_{\cD}\succ0\\ &\forall\ (A',B')\in\Sigma_{\delta_w,\delta_v}(\cD):\\ &\quad\qquad\qquad(A'+B'K_{\cD})^\top P_{\cD}(A'+B'K_{\cD})\prec P_{\cD}. \end{aligned} \end{equation} The \emph{tolerance} $\eps_T^\kappa(\pi;A,B)$ is the supremum of all $e$ such that every level $\eps\in[0,e)$ is certified. The order matters: a different gain may be computed from each record, but all explanations of that record share its gain and Lyapunov matrix. Success on a single error-free record or a finite sample of records is not enough. The definition requires success at every smaller error level; success at the supremum itself is not claimed. \par
The principal benchmark is the \emph{causal oracle tolerance} \begin{equation}\label{eq:oracle} \Gamma(A,B)=\sup_{T\ge1}\,\sup_{\pi}\,\eps_T^0(\pi;A,B), \end{equation} where $\pi$ ranges over deterministic causal policies. The oracle knows the true plant when it chooses its policy and duration, but it must still certify every compatible alternative. Because it observes exact states, it is also a valid benchmark for measured-state experiments: a record with zero sensor error is admissible for every $\kappa$, and all its exact-state explanations remain compatible when $\kappa>0$. Hence $\eps_T^\kappa(\pi;A,B)\le\eps_T^0(\pi;A,B)\le\Gamma(A,B)$. Predetermined plant-informed experiments are special causal policies, so $\Gamma$ covers them as well. \par
\subsection{Controllability scales} Define the controllability matrices \begin{equation}\label{eq:controllability} \mathcal C_j=[B\ AB\ \cdots\ A^{j-1}B],\qquad\mathcal C=\mathcal C_n. \end{equation} For a controllable pair with minimal-polynomial degree $d$, $\smin(\mathcal C_d)>0$. In general, $\smin(\mathcal C_j)$ measures how strongly the first $j$ input-response blocks reach the weakest state direction, in the coordinates of the error bounds. We relate these singular values to the experiment tolerances $\eps_T^\kappa$ and $\Gamma$. \par
\par
\section{Information and exact minimum duration}\label{sec:experiments} \par
An input sequence is \emph{universal} if its tolerance is positive for every controllable plant of the given dimensions; the positive level may depend on the plant. We first identify the information that universality demands. \par
\begin{theorem}[Information required for positive robustness] \label{thm:information} Let $(X,Y,U)$ be an error-free record of a controllable real plant, started from the known state $x_0=0$. \begin{enumerate} \item With exact states, the record is certifiable for some positive $\delta_w$ if and only if $\operatorname{rank}X=n$. \item Regarded as a measured record, it is certifiable for some positive $\delta_w$ and $\delta_v$ if and only if \begin{equation} \operatorname{rank}\begin{bmatrix}X\\U\end{bmatrix}=n+\operatorname{rank}U. \label{eq:information-rank} \end{equation} If \eqref{eq:information-rank} fails, then for every $\delta_v>0$, even with $\delta_w=0$, no single gain stabilizes all compatible models. \end{enumerate} When the corresponding rank condition holds, for all sufficiently small error bounds, every physical record generated by the same input sequence is certifiable ($\delta_w$ in part~1; $\delta_w$ and $\delta_v$ in part~2). \end{theorem}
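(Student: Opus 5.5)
The proof treats the exact-state and measured-state cases separately. In each case the error-free record is converted into an affine family of compatible models $[A'\ B']$, and it is then checked whether a single gain can make $A'+B'K$ Schur for the whole family. Write $Z=[X;U]$. Since the record is error-free and $x_0=0$, we have $Y=AX+BU$. With exact states, every compatible pair satisfies $[A'\ B']Z=Y-W'$ with $\|W'\|\le\delta_w$.

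\emph{Part 1, necessity.} If $\rank X<n$, pick $a\ne0$ with $a^\top X=0$. Then $(A+Ma^\top,B)$ is compatible for every $M\in\R^{n}$, even when $\delta_w=0$. For any fixed $K$, the closed loop $A+BK+Ma^\top$ is a rank-one update. Take $M=s\,a$ with $s\to\infty$: the trace grows like $s\|a\|^2$, so some eigenvalue leaves the unit disc. Hence no common $K$ exists.

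\emph{Part 1, sufficiency.} Suppose $\rank X=n$ and restrict attention to gains of the form $K=UG$ with $XG=I$. For such $K$,
\[
A'+B'K=[A'\ B']ZG=YG-W'G=A+BK-W'G .
\]
The key claim is that some right inverse $G$ makes $A+BK$ Schur. Because $x_0=0$, the columns of $X$ lie in the reachable space of $(A,B\Pi_U)$, where $\Pi_U$ is the projection onto $\operatorname{im}U$. So $\rank X=n$ forces $(A,B\Pi_U)$ to be controllable, and we can choose a stabilizing $K$ whose rows lie in $\operatorname{row}U$.

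It then remains to show that $[I;K]\in\operatorname{im}Z$, i.e.\ that every left null vector $(a,b)$ of $Z$ satisfies $a+K^\top b=0$. I would establish this by exploiting the shift structure $x_{t+1}=Ax_t+Bu_t$. If $a^\top X+b^\top U=0$, multiplying the dynamics and using $\rank X=n$ should force $a=0$ and $b^\top U=0$; then $K^\top b=0$ because $K$ uses only $\operatorname{row}U$. If this direct route stalls, I would instead invoke the informativity characterization of~\cite{informativity2020}.

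Once $G$ is fixed, take $P$ as the Lyapunov matrix of $A+BK$. The perturbation $W'G$ has norm at most $\delta_w\|G\|$, so the strict inequality \eqref{eq:cq} persists for small $\delta_w$. For the final claim, the physical record satisfies $X=X^0+O(\delta_w)$ (the exact-state analogue of the error-free blocks), so the rank condition and a bound on $\|X^\dagger\|$ persist. The same construction, applied to the measured record, gives a certificate with a uniform margin.

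\emph{Part 2.} Sensor errors let the alternative states $X'$ move by $O(\delta_v)$, so the left null structure of $Z$ is no longer rigid.

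\emph{Failure of \eqref{eq:information-rank}.} If the condition fails, there is $(a,b)$ with $a\ne0$, $a^\top X+b^\top U=0$ and $b^\top U\ne0$. Take $X'=X+\eta\,a\,c^\top U/\|a\|^2$ for small $\eta$ and a free vector $c$. The new relation $a^\top X'+(b-\eta c)^\top U=0$ still holds, and as in Part~1 it yields compatible models $[A'\ B']+M[a^\top\ (b-\eta c)^\top]$ with arbitrary $M$. A common stabilizing $K$ would then need $a+K^\top(b-\eta c)=0$ for all admissible $c$, hence $K^\top c=0$, hence $a=0$, a contradiction. This holds even with $\delta_w=0$.

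\emph{Main obstacle.} The delicate step is doing this perturbation legitimately. The alternative trajectory must keep $x'_0=0$ and must satisfy $x'_{t+1}=A'x'_t+B'u_t$ with the same measured $\widetilde x_t$. I will restrict $c$ so that the perturbation vanishes on the first column, and absorb the perturbation into $V'$ and the successor block. I expect this bookkeeping, together with showing that the restricted set of $c$ is still rich enough to force $K^\top c=0$ on a space containing $b$, to be the hardest part.

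\emph{Sufficiency of \eqref{eq:information-rank}.} When the condition holds, the left kernel of $Z$ is $\{(0,b):b^\top U=0\}$, which is stable under small perturbations of $X$. So the Part~1 construction of $G$ with $XG=I$ and $K=UG$ goes through with $X'$ in place of $X$. The closed-loop perturbation is then bounded by $O(\delta_w+\delta_v(1+\|A'\|))\|G\|$. A priori compactness of the compatible $[A'\ B']$ restricted to $\operatorname{im}Z$ controls the term $\|A'\|$, and robustness of the strict Lyapunov inequality finishes both the error-free case and the physical-record claim.
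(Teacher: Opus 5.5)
\textbf{Gap in Part~1 sufficiency.} The argument rests on a false claim. With $\rank X=n$ and $x_0=0$ it is \emph{not} true that every left-null vector $(a,b)$ of $Z=[X;U]$ has $a=0$. That property is exactly condition \eqref{eq:information-rank}, which is strictly stronger; it is what separates Part~2 from Part~1.

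The paper's scalar example shows this. There $A=2$, $B=[1\ -2]$, and the inputs $e_1,e_2,0$ give $X=[0\ 1\ 0]$ of rank one, while $(a,b)=(1,(0,-1))$ annihilates $Z$. The strategy itself fails, not just its proof. The gain $K=(-2,0)^\top$ has columns in $\operatorname{im}U=\R^2$ and gives $A+BK=0$. Yet the exact fits $A'=\alpha$, $B'=[1\ -\alpha]$ give closed loop $\alpha-2$, which is unbounded in $\alpha$.

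So a gain obtained by stabilizing $(A,B\Pi_U)$ need not satisfy $[I;K]\in\operatorname{im}Z$. You must instead find a Schur matrix inside the constrained family $\{A+BU\Theta: X\Theta=I\}=\{YX^\dagger+Y\Pi K_0\}$, where $\Pi=I-X^\dagger X$. The missing idea is that the zero-start record is itself a trajectory of the pair $(YX^\dagger,Y\Pi)$ driven by the inputs $e_{t+1}\in\R^T$. Its reachable space therefore contains every column of $X$, so the pair is controllable and pole placement supplies $K_0$. Falling back on~\cite{informativity2020} does not close the gap. That characterization assumes exactly the existence of a right inverse $\Theta$ with $Y\Theta$ Schur, and proving it is the step that uses $x_0=0$.

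\textbf{Gap in Part~2 necessity.} You identify the realizability problem but do not resolve it. For a free $c$, $X'=X+\eta ac^\top U/\|a\|^2$ is generally not the current-state block of any exact zero-start trajectory. You need such a trajectory because $\delta_w=0$ and $X',Y'$ share columns.

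The fix is to perturb by a state transformation rather than arbitrarily. Take $c=b$. Then $X'=FX$ with $F=I-\eta\,aa^\top/\|a\|^2$, so $x'_t=Fx_t$ for all $t$. This is an exact trajectory of $(FAF^{-1},FB)$ from zero, with sensor error $O(\eta)$ and left-null vector $(a,(1-\eta)b)$. The paper uses the equivalent scaling $x'_t=sx_t$, explained by $A'=A+va^\top$ and $B'=s(B+vc^\top)$. Two values of $\eta$ then force $b^\top K=0$ and hence $a=0$, so $c$ never needs to vary. If $b^\top U=0$, then $a^\top X=0$ and Part~1 necessity applies directly, since zero sensor error is admissible.

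\textbf{Part~2 sufficiency.} Your gain construction does work here, because under \eqref{eq:information-rank} the left kernel of $Z$ is $\{(0,b)\}$. However, the uniform bound on $\|A'\|$ over compatible models needs an actual argument, not an appeal to compactness. One route is the projection identity $A'-\widehat A=\Delta Z P_U(\widetilde XP_U)^\dagger$ from Lemma~\ref{lem:measured}, which gives $\|A'\|\le\|\widehat A\|+\|\Delta Z\|/\sigma_\perp$.
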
 \par
Condition \eqref{eq:information-rank} says that the state rows add $n$ independent directions to the row space of $U$. After restricting the inputs to the directions used by the experiment, this is full row rank of the reduced stacked state-input matrix. Thus an arbitrarily small sensor-error bound changes the information requirement. Exact states can determine a stabilizing gain without determining the plant. With sensor uncertainty, the error-free record must determine the state dynamics and the action of every input direction used in the experiment; the gain can ignore the unused ones. \par
\emph{Proof idea.} With exact states and $\operatorname{rank}X=n$, the zero initial state allows us to choose a right inverse $\Theta$ of $X$ for which $Y\Theta$ is Schur. The gain $K=U\Theta$ then gives $A'+B'K=Y\Theta-W'\Theta$ for every compatible model, so a strict Lyapunov inequality for $Y\Theta$ survives small process errors. Sensor errors create a different ambiguity. If $a^\top X+c^\top U=0$ with $a\ne0$, then for every $s$ sufficiently near one the rescaled trajectory $sx_t$ is consistent with the measurements and is explained, without process error, by $A'=A+va^\top$, $B'=s(B+vc^\top)$ for arbitrary $v$. A common gain would need $a^\top+sc^\top K=0$ for two distinct values of $s$, which forces $a=0$. Conversely, under \eqref{eq:information-rank} the compatible models, restricted to the used input directions, shrink to the true plant as the error bounds vanish. \par
\begin{example}[A scalar discontinuity] Take $A=2$, \mbox{$B=[1\ -2]$}, and the inputs $e_1,e_2,0$. The error-free state sequence is $(0,1,0,0)$. Exact-state alternatives satisfy $|A'+B'_2|\le\delta_w$, so $K=(0,1)^\top$ certifies every $\delta_w<1$. With any positive sensor-error bound, the underlying trajectory $(0,s,0,0)$ is consistent with the measurements for every $s$ sufficiently near one and is explained without process error by $A'=a$, $B'=[s\ -sa]$ for every $a\in\mathbb R$. For $K=(k_1,k_2)^\top$, the closed-loop scalar is $a(1-sk_2)+sk_1$. Stability for all $a$ forces $sk_2=1$; two distinct admissible values of $s$ make this impossible. \end{example} \par
The rank conditions concern one record; the minimum-duration question asks for one input sequence, fixed in advance, that satisfies them on every controllable plant. The following pulse sequences do so without resetting the state: \begin{align} U_{\mathrm{exact}}&:\ u_{jn}=e_{j+1}, & T_{\mathrm{exact}}&=mn+1, \label{eq:word-e}\\ U_{\mathrm{meas}}&:\ u_{j(n+1)}=e_{j+1}, & T_{\mathrm{meas}}&=m(n+1), \label{eq:word-s} \end{align} for $0\le j<m$; all other inputs are zero. Thus $U_{\mathrm{exact}}$ applies the unit pulses $e_1,\ldots,e_m$ every $n$ steps and $U_{\mathrm{meas}}$ every $n+1$ steps; both end with $n$ zero inputs. In $U_{\mathrm{exact}}$, the final zero input makes the last response $x_{mn}$ a column of $X$. \par
\begin{theorem}[Exact universal minimum durations]\label{thm:minimal} For real $n$-state, $m$-input plants, the minimum duration of a universal predetermined sequence is $mn+1$ with exact states and $m(n+1)$ for every fixed $\kappa>0$; $U_{\mathrm{exact}}$ and $U_{\mathrm{meas}}$ attain these minima. The lower bounds hold even for controllable diagonal plants with one eigenvalue in each of $n$ prescribed disjoint nonempty open intervals in $(0,1)$. For $m=1$, the common minimum $n+1$ is necessary even for deterministic causal policies. \end{theorem}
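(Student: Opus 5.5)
The proof splits into achievability (the pulse sequences satisfy the rank conditions of Theorem~\ref{thm:information} on every controllable plant) and lower bounds (an adversarial diagonal plant violates them). Universality follows from the last sentence of Theorem~\ref{thm:information}: if the error-free record of an input sequence satisfies the relevant rank condition, then every physical record is certifiable for all sufficiently small bounds. Under \eqref{eq:budget-ray} with fixed $T$ and $\kappa$, this gives a positive tolerance. For the converse, a positive tolerance means some $\eps>0$ is certified. The error-free record is then a physical record that must be certifiable with $\delta_w=\eps\sqrt T>0$ (and $\delta_v>0$ when $\kappa>0$), so the rank condition must hold on it.

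\emph{Achievability.} For $U_{\mathrm{exact}}$, set $z_j:=x_{jn+1}=Ax_{jn}+Be_{j+1}$. Because the input is zero between pulses, block $j$ of columns of $X$ is $z_j,Az_j,\ldots,A^{n-1}z_j$. The final zero input places $x_{mn}$ in $X$, so the last block is complete. Each Krylov space $\mathcal K_j=\operatorname{span}\{A^kz_j\}_{k<n}$ is $A$-invariant by Cayley--Hamilton, so $S=\sum_j\mathcal K_j=\operatorname{im}X$ is $A$-invariant. By induction on $j$ we get $x_{jn}=A^{n}z_{j-1}\in S$, hence $Be_{j+1}=z_j-Ax_{jn}\in S$. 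Thus $S$ is an $A$-invariant subspace containing $\operatorname{im}B$, and controllability gives $S=\R^n$, i.e.\ $\rank X=n$.

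For $U_{\mathrm{meas}}$, \eqref{eq:information-rank} with $\rank U=m$ means that $a^\top X+c^\top U=0$ forces $a=0$. At the non-pulse columns this identity reads $a^\top A^{k-1}z_j=0$ for $k=1,\ldots,n$, where now $z_j=x_{j(n+1)+1}$. Each pulse block contains $n$ such columns: the $n$ zero inputs between pulses, or the final $n$ zero inputs. Hence $a\perp\mathcal K_j$ for every $j$. The same invariance argument shows $\sum_j\mathcal K_j=\R^n$, so $a=0$.

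\emph{Lower bounds.} Take $A=\diag(\lambda_1,\ldots,\lambda_n)$ with distinct $\lambda_i$ in the prescribed intervals. The pair is then controllable iff every row $b_i^\top$ of $B$ is nonzero. Row $i$ of $X$ equals $b_i^\top M(\lambda_i)$, where $M(\lambda)\in\R^{m\times T}$ has columns $\sum_{s<t}\lambda^{t-1-s}u_s$ and a zero first column.

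In the exact case with $T\le mn$, the rows live in an $(T-1)$-dimensional space while the subspaces $R_i=\operatorname{row}M(\lambda_i)$ have total dimension at most $mn>T-1$. Hence there are $r_i\in R_i$, not all zero, with $\sum r_i=0$. Take $b_i$ with $b_i^\top M(\lambda_i)=r_i$ where $r_i\neq0$, and any nonzero $b_i$ elsewhere. This yields a controllable plant with a nontrivial left null vector of $X$ (nonzero exactly on the indices with $r_i\neq0$), so $\rank X<n$.

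In the measured case with $T\le m(n+1)-1$, split by $\rank U=r$. If $r<m$, choose $B$ vanishing on $\operatorname{im}U$ and, on a complementary direction, equal to a vector with all entries nonzero. The plant is controllable but $X=0$, so \eqref{eq:information-rank} fails. If $r=m$, repeat the counting argument in the quotient $\R^T/\operatorname{row}U$, which has dimension $T-m\le mn-1<mn$. This produces $a\ne0$ and $c$ with $a^\top X+c^\top U=0$, again a violation.

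For $m=1$ and any causal policy with $T\le n$, $X$ has $T$ columns with $x_0=0$, so $\rank X\le n-1$ on every plant. The error-free record is then uncertifiable by part~1 of Theorem~\ref{thm:information}, and part~2 fails as well.

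I expect the main obstacle to be the measured lower bound. There one must check that the case split on $\rank U$ is exhaustive and that the quotient counting genuinely produces $a\ne0$. The worry is that the dependence could be carried entirely by $\operatorname{row}U$; this is excluded because each $r_i$ is taken nonzero modulo $\operatorname{row}U$, so the resulting $a$ is nonzero on the selected indices.
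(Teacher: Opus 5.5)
Your proof is correct in outline, but the dimension count in the lower bounds needs repair. You say the spaces $R_i=\operatorname{row}M(\lambda_i)$ have total dimension \emph{at most} $mn>T-1$ and conclude that they are dependent. An upper bound on $\sum_i\dim R_i$ cannot force dependence. If some $M(\lambda_i)$ has dependent rows, the $R_i$ can be independent, or all zero, and then the claimed $r_i$ do not exist. This happens, for instance, when an input channel is never excited or when $U=0$.

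The fix is to count on the parameters instead. The linear map $(b_1,\ldots,b_n)\mapsto\sum_ib_i^\top M(\lambda_i)$ goes from $\R^{mn}$ into the $(T-1)$-dimensional space of rows with zero first entry. In the measured case with $\rank U=m$, compose with the quotient by $\operatorname{row}U$, whose dimension is $T-m\le mn-1$. Either way the map has a nonzero kernel element. Put $I=\{i:b_i\ne0\}$, replace the vanishing $b_i$ by arbitrary nonzero rows, and take $a=\mathbf 1_I$. This also settles the worry in your last paragraph, since $a\ne0$ is then automatic.

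There is also a harmless slip in the exact achievability step: $x_{jn}=A^{n-1}z_{j-1}$, not $A^nz_{j-1}$. No induction is needed there, because $x_{jn}$ is itself a column of $X$.

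With that fix, your route differs from the paper's mainly in the lower bounds.

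For achievability the two arguments are dual. The paper takes a relation $a^\top X+c^\top U=0$ and removes one actuator at a time: the pulse column gives $c_j=0$, the next $n$ columns give $a^\top A^kBe_j=0$, and Cayley--Hamilton keeps earlier responses annihilated. You instead show that the span of the Krylov blocks is $A$-invariant and contains $\operatorname{im}B$; in the measured case you need only the non-pulse columns.

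For the lower bounds, the paper fixes the input and finds a nonzero row polynomial $p$ of degree at most $n$ with $p(z)u(z)\equiv0\pmod{z^T}$. This uses $m(n+1)$ coefficients against $T$ equations, or, for exact states, $p=zq$ with $mn$ coefficients against $T-1$. It then realizes $p/q_A$ by partial fractions as $p(0)+z\mathbf 1_n^\top(I-zA)^{-1}B$ with diagonal $A$, producing the single relation $\mathbf 1_n^\top x_t+p(0)u_t=0$.

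The paper's construction treats every input sequence uniformly, with no split on $\rank U$, and gives $B$ explicitly. Its cost is that the eigenvalues must avoid the common zeros of $p(1/\lambda)$, so that no row of $B$ vanishes. You fix arbitrary distinct eigenvalues first and solve a linear system for the rows of $B$. Your route is more elementary and needs no exceptional set, at the price of the case split on $\rank U$ and a relation supported only on a subset of the coordinates.

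Your causal $m=1$ argument coincides with the paper's.
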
 \par
Thus any positive sensor uncertainty, however small, costs exactly $m-1$ additional transitions. \par
\emph{Proof idea.} For exact states, suppose $a^\top X=0$; for measured states, suppose $a^\top X+c^\top U=0$. The pulse spacing tests the relevant relation against $Be_j,ABe_j,\ldots,A^{n-1}Be_j$, one actuator at a time, so controllability forces $a=0$. With measured states, each pulse column also forces the corresponding entry of $c$ to vanish. For the lower bound, the samples of any shorter sequence satisfy a nonzero polynomial relation, and a diagonal plant built from it yields a nonzero state-row dependence in the exact-state case, or a relation $a^\top X+c^\top U=0$ with $a\ne0$ in the measured-state case. Its poles may be chosen in the prescribed intervals. Counting polynomial coefficients gives $m(n+1)$ for measured states; the zero initial column gives $mn+1$ for exact states. \par
Set \begin{equation}\label{eq:short-tolerances} \begin{aligned} E_{\mathrm{exact}}&=\eps_{mn+1}^0(U_{\mathrm{exact}};A,B),\\ E_{\mathrm{meas}}^\kappa&=\eps_{m(n+1)}^\kappa(U_{\mathrm{meas}};A,B). \end{aligned} \end{equation} The subscripts name the sequences, not the measurement model: $E^0_{\mathrm{meas}}$ evaluates $U_{\mathrm{meas}}$ with exact states and need not equal $E_{\mathrm{exact}}$. Universality asks only that these tolerances be positive; the next section asks how large they are compared with the best possible experiment. \par
\section{Robustness of the shortest experiments}\label{sec:spectral} The benchmark $\Gamma$ knows the plant, may choose inputs causally, and has no duration limit. Nevertheless, the shortest universal sequences retain a fixed fraction of its tolerance on a broad spectral class. Both bounds use the controllability scale $\smin(\mathcal C)$. We first give record tests connecting data errors to these bounds. \par
\subsection{Certificates from a single record}\label{sec:record-certificates} We use a single number, the record margin, both to certify noisy records and to bound what an error-free record can certify. It builds on the matrix-inequality approach to data-based stabilization~\cite{slemma2022,petersen2022,qmi2023}. For real matrices $X,Y\in\R^{n\times T}$, define $\marginrec(X,Y)\ge0$ by the semidefinite program \begin{equation} \begin{aligned} \marginrec(X,Y)^2={}&\max_{H\succeq0,\,t\in\R}t\\ \text{subject to }& \begin{bmatrix} XX^\top-H-tI&XY^\top\\ YX^\top&YY^\top+H \end{bmatrix}\succeq0. \end{aligned} \label{eq:record-margin} \end{equation} \par
\begin{lemma}[Record certificates and obstructions] \label{lem:margin} Let $X,Y,U$ be real matrices of compatible sizes; they need not come from a trajectory or start at zero. Suppose $\Sigma_{\delta_w}(X,Y,U)$ is nonempty. If $\delta_w<\marginrec(X,Y)$, a certificate for $\Sigma_{\delta_w}(X,Y,U)$ can be computed from $(X,Y,U,\delta_w)$, with no rank assumption on $[X;U]$. If some real pair fits the record exactly, that is, $Y=AX+BU$, this condition is also necessary. Moreover, for all real $E_X,E_Y$ of the sizes of $X,Y$ and all complex $q,\tau$ with $\|q\|=1$ and $|\tau|\le1$, \begin{gather} |\marginrec(X{+}E_X,Y{+}E_Y){-}\marginrec(X,Y)|\le\|E_X\|+\|E_Y\|, \label{eq:margin-lipschitz}\\ \marginrec(X,Y)\le\|q^*(X-\tau Y)\|. \label{eq:margin-pencil} \end{gather} \end{lemma}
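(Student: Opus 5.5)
The plan is to treat the semidefinite program \eqref{eq:record-margin} as a quadratic statement about test vectors. Then I prove the two inequalities, which are elementary, before the certificate and obstruction claims, which are the substantive part.

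\emph{Quadratic form.} By congruence, feasibility of $(H,t)$ means that for all $a,b\in\C^n$,
\begin{equation*}
\|X^\top a+Y^\top b\|^2+b^*Hb\ \ge\ t\|a\|^2+a^*Ha .
\end{equation*}
For real symmetric matrices the real and Hermitian versions of this inequality are equivalent.

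\emph{Pencil bound \eqref{eq:margin-pencil}.} Insert $a=q$ and $b=-\bar\tau q$. The inequality becomes $\|q^*(X-\tau Y)\|^2\ge t+(1-|\tau|^2)\,q^*Hq\ge t$, using $H\succeq0$ and $|\tau|\le1$. Taking the maximum over $t$ gives the claim.

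\emph{Lipschitz bound \eqref{eq:margin-lipschitz}.} I would use the homogenized form
\begin{equation*}
\marginrec(X,Y)=\max_{H\succeq0}\ \min_{a,b}\ \frac{\bigl\|\,[X^\top a+Y^\top b;\ H^{1/2}b]\,\bigr\|}{\sqrt{\|a\|^2+a^*Ha}}\quad\text{over the relevant normalization.}
\end{equation*}
The numerator is a Euclidean norm that is linear in $(X,Y)$, so the triangle inequality moves the perturbation out as $\|E_X^\top a+E_Y^\top b\|$.

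The difficulty is that $b$ is not normalized, so the perturbation term is not obviously controlled by the denominator. What I would try first is to show that at the optimum the minimizing $b$ can be taken with $\|b\|\le\|a\|$. The intuition is that large $b$ is penalized by $b^*Hb$ once $H$ dominates, and one can rescale $H$ upward without loss. If that works, $\|E_X^\top a+E_Y^\top b\|\le(\|E_X\|+\|E_Y\|)\|a\|$ and the Lipschitz bound follows.

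If it fails, the fallback is to prove the bound through the dual SDP. There $\marginrec^2$ is a minimum of $\tr$-type expressions in $[X;Y]$ over a compact dual set. $\marginrec$ should then be a minimum of norms of linear images of $[X;Y]$, which is $1$-Lipschitz in each block.

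\emph{Sufficiency of $\delta_w<\marginrec(X,Y)$.} I would work in dual Lyapunov form, $(A'+B'K)Q(A'+B'K)^\top\prec Q$ with $Q\succ0$. Compatible pairs satisfy $A'X+B'U=Y-W'$ with $\|W'\|\le\delta_w$. I would apply the matrix S-lemma/Petersen lemma of \cite{slemma2022,petersen2022} to this matrix-ball uncertainty. The goal is a robust LMI in the unknowns $(Q,K)$, or equivalently in a right factor $\Theta$ with $X\Theta=Q$ and $K=U\Theta Q^{-1}$.

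I would then eliminate $K$ and $\Theta$ via Finsler's lemma. This elimination is where the $U$-dependence and any rank requirement on $[X;U]$ should drop out: the inequality should reduce to the margin LMI with $t=\delta_w^2$, with $H$ playing the role of the Lyapunov variable. Along the way I have to check that $\Sigma_{\delta_w}\neq\emptyset$ is what makes the S-lemma regularity condition hold. The constructive proof then yields the certificate explicitly from $(X,Y,U,\delta_w)$.

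\emph{Necessity under an exact fit $Y=AX+BU$.} Suppose $\delta_w\ge\marginrec(X,Y)$. I would take the dual optimal solution, or, guided by \eqref{eq:margin-pencil}, a unit $q$ and $|\tau|\le1$ that make $\|q^*(X-\tau Y)\|$ small. From these I would build a rank-one perturbation $A'=A+v q^*$ together with $B'=B$ and its complex-conjugate counterpart. The residual is $W'=-vq^*X+\dots$, which is kept within $\delta_w$. The construction should give every candidate $K$ a closed-loop eigenvalue $\tau^{-1}$ or an eigenvalue on the unit circle, which contradicts any strict certificate.

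I expect this step, matching the exact SDP value rather than just the pencil bound, to be the hardest. My first attempt would be to read the obstructing models off the complementary-slackness conditions of \eqref{eq:record-margin}.
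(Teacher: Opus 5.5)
The genuine gap is the necessity step. Obstructions read off a minimizing pair $(q,\tau)$ of the pencil cannot reach the threshold. The bound \eqref{eq:margin-pencil} only says $\marginrec\le\min_{q,\tau}\|q^*(X-\tau Y)\|$, and this minimum can be strictly larger, because the dual optimum need not be rank one. Already for $n=2$, a dual element with $F_2^\top=NF_1^\top$, where $N$ is the nilpotent $2\times2$ shift, can beat every pencil. Since $U$ is arbitrary, such $(X,Y)$ admit exact fits, so for $\delta_w$ between the two values your construction proves nothing.

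Even the pencil value is out of reach of a direct eigenvalue argument. Choosing $W'$ with $q^*(Y-W')=\tau^{-1}q^*X$ makes $q^*$ a left eigenvector, with eigenvalue $\tau^{-1}$, of every closed loop $(Y-W')\Theta$, but this costs $\|q^*(X-\tau Y)\|/|\tau|$. The reason is that the criterion demands a common quadratic Lyapunov function, which is stronger than common stabilization. A sharp obstruction must therefore be tailored to each gain and must use the quadratic structure, for instance through complex perturbations. Those are not compatible pairs until you show, by an S-lemma argument as in the paper's oracle proof, that a real quadratic certificate also covers them. Complementary slackness does not supply any of this.

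The missing ideas are the paper's three-step reduction.
\begin{itemize}
\item \emph{Gain form.} Under $Y=AX+BU$ with $Z=[X;U]$, perturbations of $[A\ B]$ vanishing on $\operatorname{im}Z$ cost no residual. A trace argument therefore forces $[I;K]\in\operatorname{im}Z$, that is, $K=U\Theta$ with $X\Theta=I$ and $\Theta$ in $\operatorname{row}Z$.
\item \emph{Lossless elimination.} The compatible closed loops are then exactly $(Y-\delta_w\Delta)\Theta$ with $\|\Delta\|\le1$. This is a full-block norm-bounded family, for which strict Petersen elimination is lossless: a certificate with this gain exists if and only if some $H\succ0$ satisfies $\Theta^\top(I+Y^\top H^{-1}Y)\Theta\prec(H+\delta_w^2I)^{-1}$. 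For $\delta_w=0$, scale a Lyapunov matrix of $Y\Theta$ instead.
\item \emph{Optimizing the right inverse.} Weighted least squares minimizes the left side over right inverses at $R_H^{-1}$, where $R_H=XQ_H^{-1}X^\top$ and $Q_H=I+Y^\top H^{-1}Y$; the minimizer lies in $\operatorname{row}Z$. By Woodbury, $R_H\succ H+\delta_w^2I$ is exactly strict positivity of the block in \eqref{eq:record-margin} at $t=\delta_w^2$, whence $\delta_w<\marginrec$.
\end{itemize}

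Run backwards, the same algebra is the sufficiency proof that your S-lemma/Finsler outline leaves unspecified; eliminating $K$ yields an existence test, not a certificate. Inflate an optimizer to $H=H_*+\tfrac12(\marginrec^2-\delta_w^2)I\succ0$, and put $\Theta=Q_H^{-1}X^\top R_H^{-1}$, $K=U\Theta$, $P=(H+\delta_w^2I)^{-1}$. Then $A'+B'K=Y\Theta-W'\Theta$ and $W'W'^\top\preceq\delta_w^2I$ give $(A'+B'K)^\top P(A'+B'K)\preceq\Theta^\top Q_H\Theta=R_H^{-1}\prec P$. No S-lemma regularity condition enters, so your concern about nonemptiness is moot.

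Your proof of \eqref{eq:margin-pencil} via the test vectors $a=q$, $b=-\bar\tau q$ is correct; it is the primal form of the paper's rank-one dual element. For \eqref{eq:margin-lipschitz}, your fallback is the paper's argument, but you must supply three things:
\begin{itemize}
\item derive the dual set $\{\Omega\succeq0:\tr\Omega_{11}=1,\ \Omega_{22}\preceq\Omega_{11}\}$;
\item justify strong duality;
\item factor $\Omega=[F_1;F_2][F_1;F_2]^\top$ with $\|F_1\|_F=1$ and $\|F_2\|_F\le1$.
\end{itemize}
Your ratio formula does not match the SDP, since it effectively places $tH$ rather than $H$ in the top-left block. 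The bound $\|b\|\le\|a\|$ is not needed.
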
 \par
Thus, for a record that some pair fits exactly (in particular, an error-free record), the certifiable process-error bounds are exactly those below $\marginrec$; the margin is measured in the same units as $\delta_w$. The Lipschitz bound \eqref{eq:margin-lipschitz} transfers a positive margin from an error-free record to nearby noisy records, which yields lower bounds on tolerance. The pencil bound \eqref{eq:margin-pencil} exhibits a direction in the data that caps what an error-free record can certify, which yields upper bounds. These are the two sides of the experimental comparison. \par
For measured states, fit the data by least squares and measure how much of the state data is independent of the inputs: \begin{equation}\label{eq:conditional-fit} \begin{gathered} Z=\begin{bmatrix}\widetilde X\\ U\end{bmatrix},\quad[\widehat A\ \widehat B]=\widetilde YZ^\dagger,\quad\widehat Y=[\widehat A\ \widehat B]Z,\\ \sigma_\perp=\sigma_{\min}\bigl(\widetilde X(I-U^\dagger U)\bigr). \end{gathered} \end{equation} Since $\sigma_{\min}$ is the least row singular value, $\sigma_\perp>0$ if and only if $\operatorname{rank}Z=n+\operatorname{rank}U$, which is \eqref{eq:information-rank} for the measured data; $\sigma_\perp$ quantifies it. \par
\begin{lemma}[Conditional measured-state enclosure]\label{lem:measured} Suppose $\Sigma_{\delta_w,\delta_v}(\mathcal D)$ is nonempty and $\delta_v<\sigma_\perp$. Then every compatible pair satisfies \begin{equation}\label{eq:measured-enclosure} \left\|\bigl([A'\ B']-[\widehat A\ \widehat B]\bigr)Z\right\| \le\eta:= \frac{\delta_w+(1+\|\widehat A\|)\delta_v} {1-\delta_v/\sigma_\perp}. \end{equation} The set of all pairs satisfying \eqref{eq:measured-enclosure} is $\Sigma_\eta(\widetilde X,\widehat Y,U)$, and it admits a certificate if and only if \begin{equation}\label{eq:measured-test} \eta<r_{\mathrm{rec}}(\widetilde X,\widehat Y). \end{equation} In that case, a certificate for the measured record can be computed from the data and the error bounds. Condition \eqref{eq:measured-test} is exact for the enclosing set but only sufficient for the measured compatibility set. \end{lemma}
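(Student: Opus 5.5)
The plan is to reduce everything to Lemma~\ref{lem:margin}. For any compatible pair $(A',B')$ we have explaining states $X'$, $Y'$ and errors $W'$, $V'$ with $Y'=A'X'+B'U+W'$, $\widetilde X=X'+V'_-$ and $\widetilde Y=Y'+V'_+$. Substituting gives
\[
\widetilde Y-A'\widetilde X-B'U=W'+V'_+-A'V'_-.
\]
Subtracting $\widehat Y=[\widehat A\ \widehat B]Z$ then gives
\[
\Delta Z=(\widetilde Y-\widehat Y)-(W'+V'_+-A'V'_-),\qquad \Delta:=[A'\ B']-[\widehat A\ \widehat B].
\]
Two facts control this identity. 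First, $\widetilde Y-\widehat Y=\widetilde Y(I-Z^\dagger Z)$ is orthogonal to $\operatorname{row}Z$. Second, $\Delta Z$ lies in $\operatorname{row}Z$. Projecting onto $\operatorname{row}Z$ therefore gives $\Delta Z=-(W'+V'_+-A'V'_-)Z^\dagger Z$, and hence
\[
\|\Delta Z\|\le\delta_w+\delta_v+\|A'\|\delta_v .
\]
This bound still contains the unknown $\|A'\|$, so it is not yet closed.

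To remove $\|A'\|$, write $A'=\widehat A+\Delta_A$, where $\Delta_A$ is the first block of $\Delta$. The task is to bound $\|\Delta_A\|$ by $\|\Delta Z\|/\sigma_\perp$. Let $\Pi=I-U^\dagger U$ be the projector onto the orthogonal complement of $\operatorname{row}U$. Then
\[
\Delta Z\,\Pi=\Delta_A\widetilde X\Pi+\Delta_B U\Pi=\Delta_A\widetilde X\Pi .
\]
Since $\sigma_\perp$ is the least row singular value of $\widetilde X\Pi$, this yields $\|\Delta_A\|\,\sigma_\perp\le\|\Delta_A\widetilde X\Pi\|\le\|\Delta Z\|$. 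Substituting into the previous bound gives
\[
\|\Delta Z\|\le\delta_w+(1+\|\widehat A\|)\delta_v+\frac{\delta_v}{\sigma_\perp}\|\Delta Z\|.
\]
Because $\delta_v<\sigma_\perp$, the coefficient of $\|\Delta Z\|$ on the right is below one, and rearranging gives exactly \eqref{eq:measured-enclosure}. I expect this step to be the main obstacle. The delicate points are getting the projection argument right, in particular the orthogonality of the residual to $\operatorname{row}Z$, and making sure the sensor error enters through $A'V'_-$ and not through the larger operator norm of the stacked error.

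Next, identify the enclosing set. For any pair, $[A'\ B']Z-\widehat Y=A'\widetilde X+B'U-\widehat Y$. So the condition $\|\Delta Z\|\le\eta$ is exactly $\|\widehat Y-A'\widetilde X-B'U\|\le\eta$, which is $\Sigma_\eta(\widetilde X,\widehat Y,U)$ by \eqref{eq:consistency} in its algebraic form. This set contains $(\widehat A,\widehat B)$, which fits $\widehat Y=\widehat A\widetilde X+\widehat BU$ exactly. Lemma~\ref{lem:margin} therefore applies in both directions: the set admits a certificate if and only if $\eta<\marginrec(\widetilde X,\widehat Y)$, and in that case the certificate is computable from $(\widetilde X,\widehat Y,U,\eta)$. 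These quantities are themselves computable from the data and the error bounds.

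Finally, the measured compatibility set is nonempty by hypothesis and is contained in the enclosure. Any certificate for the enclosure is therefore a certificate for the measured record, which proves sufficiency. The test is only sufficient for the compatibility set because the inclusion may be strict: the enclosure contains pairs for which no explaining errors satisfying \eqref{eq:budgets} exist.
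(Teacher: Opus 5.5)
Your proposal is correct and follows essentially the same route as the paper. You project the data equation onto $\operatorname{row}Z$ to get $\Delta Z=(A'V'_- - W' - V'_+)Z^\dagger Z$, bound $\|A'-\widehat A\|\le\|\Delta Z\|/\sigma_\perp$ through the projector $I-U^\dagger U$, rearrange, and then apply the zero-residual part of Lemma~\ref{lem:margin} with $(\widehat A,\widehat B)$ as the exact fit, followed by containment. The only cosmetic difference is that the paper writes the middle step as the identity $A'-\widehat A=\Delta Z P_U(\widetilde X P_U)^\dagger$, whereas you use the least-row-singular-value inequality directly.
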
 The enclosure measures changes in the model coefficients through their effect on the recorded data $Z$. On the used actuator subspace it is a matrix ellipsoid; unused input directions remain unconstrained. \par
Both tests use only the observations and the error bounds. For the uniform guarantees below, we first normalize the data by the data themselves; this mimics the analysis normalization of Section~\ref{sec:spectral}-B without knowledge of the plant. Choose an invertible $S$ with $SS^\top=\widetilde X\widetilde X^\top$ ($SS^\top=XX^\top$ for exact states) and set \[ \begin{aligned} \widetilde X_c&=S^{-1}\widetilde X, & \delta_{w,c}&=\|S^{-1}\|\,\delta_w,\\ \widetilde Y_c&=S^{-1}\widetilde Y, & \delta_{v,c}&=\|S^{-1}\|\,\delta_v. \end{aligned} \] For exact states, use the same formulas without tildes. For measured states, recompute the fit and $\sigma_\perp$ from the normalized data. Apply Lemma~\ref{lem:margin} (exact states) or Lemma~\ref{lem:measured} (measured states) to obtain $(K_c,P_c)$. Return $K=K_cS^{-1}$ and $P=S^{-\top}P_cS^{-1}$. If $S$ does not exist or the test fails, the procedure reports failure. Unused actuator directions need no special treatment. \par
\subsection{A uniform lower bound and comparison with the oracle} The spectral condition allows repeated and defective eigenvalues away from the unit circle; near the circle, distinct factors of the minimal polynomial must stay apart. To state this precisely, let $p_A$ be the minimal polynomial of $A$, of degree $d$, with roots repeated according to their multiplicity in $p_A$. For $|z|=1$, let $\varrho_1(z)\le\cdots\le\varrho_d(z)$ be the ordered distances from $z$ to these roots, and define the \emph{boundary margin} \begin{equation} b_{\partial}(A)=\min_{|z|=1}\prod_{j=2}^{d}\varrho_j(z), \label{eq:boundary-margin} \end{equation} where an empty product is one. In words, $b_{\partial}(A)$ is the smallest value, over the unit circle, of the product of the distances from $z$ to all roots except the nearest one. A single root may therefore reach the unit circle without making the margin vanish. Since the multiplicity of a root in $p_A$ is the size of its largest Jordan block, $b_{\partial}(A)>0$ exactly when all unit-circle eigenvalues of $A$ are semisimple. Together with a bound on the root moduli, a uniform lower bound on $b_{\partial}$ keeps two root factors from approaching the same point of the unit circle. \par
\begin{theorem}[Shortest sequences versus the causal oracle] \label{thm:competitive}\label{thm:boundary}\label{thm:short-floor} Fix $n,m\ge1$ and $M\ge0$. \begin{subequations}\label{eq:boundary-competitive} \par
\noindent\emph{(i) Uniform short-experiment guarantee.} There is a constant $\alpha>0$, depending only on $n$, $m$, and $M$, such that, for every $\kappa\ge0$, every controllable real pair with $\rho(A)\le M$ satisfies \begin{equation}\label{eq:short-guarantee} E_{\mathrm{exact}}\ge\alpha\smin(\mathcal C),\quad E_{\mathrm{meas}}^\kappa\ge\frac{\alpha\smin(\mathcal C)}{\max\{1,\kappa\}}. \end{equation} \emph{(ii) Oracle comparison.} For every $b_0>0$ there is a constant $\beta<\infty$, depending only on $n$, $M$, and $b_0$, such that every such pair with $b_{\partial}(A)\ge b_0$ also satisfies \begin{equation}\label{eq:boundary-oracle} \Gamma(A,B)\le\sqrt m\,\beta\smin(\mathcal C). \end{equation} \end{subequations} The lower bounds in part~(i) are guaranteed by the data-based certificate procedure of Section~\ref{sec:record-certificates}, which uses only the record and the stated error bounds. \end{theorem}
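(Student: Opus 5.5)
Part~(i) follows the plan of the paper: bound the record margin of the error-free record from below, then transfer it to noisy records. Part~(ii) uses the pencil bound \eqref{eq:margin-pencil} of Lemma~\ref{lem:margin} to cap what any error-free oracle record can certify.

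\emph{Part (i).} With exact states, $U_{\mathrm{exact}}$ makes the columns of $X$ the responses $A^kBe_j$, possibly superposed across pulses, so $X=\mathcal C\,\Pi$ for a fixed $0/1$-structured block-Toeplitz matrix $\Pi$ built from the pulse positions. Rather than bounding $\marginrec(X,Y)$ directly in terms of $\|A\|$, which is not controlled, I will work in normalized coordinates $S^{-1}X$ with $SS^\top=XX^\top$, exactly as the certificate procedure does. The normalized error is then $\|S^{-1}\|\delta_w\le\delta_w/\smin(X)$, and $\smin(X)\ge c\,\smin(\mathcal C)$ because $\Pi$ contains a unit-triangular square block. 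The claim thus reduces to showing $\marginrec(S^{-1}X,S^{-1}Y)\ge c_0(n,m,M)>0$ uniformly.

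The normalized pair $(\bar A,\bar B)=(S^{-1}AS,S^{-1}B)$ is similar to the original pair, so it keeps $\rho\le M$, and its normalized record has orthonormal state rows. Extracting a uniform margin is the step I expect to be the main obstacle, since the set of normalized plants is not obviously compact. My first attempt is a compactness argument on the normalized records: the entries of $S^{-1}Y=\bar A\,S^{-1}X+\bar B\,\Pi_U$ should be bounded in terms of $M$ alone. The reason is that $\bar A^n$ is a polynomial in $\bar A$ with coefficients bounded by $M$ (Cayley--Hamilton), applied to the orthonormal data columns. The margin is positive on every limit point by Theorem~\ref{thm:information}, and it is continuous by \eqref{eq:margin-lipschitz}, so it has a positive minimum.

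Process errors then shift the normalized data by at most $\delta_w/\smin(X)$, which gives $E_{\mathrm{exact}}\ge\alpha\smin(\mathcal C)$ after accounting for the $\sqrt T$ normalization. For measured states, I apply Lemma~\ref{lem:measured}. On the error-free record, $\sigma_\perp$ is bounded below by $c\,\smin(\mathcal C)$ because the pulse spacing $n+1$ separates the input columns. The quantity $\eta$ is then $O\bigl((1+\kappa)\eps/\smin(\mathcal C)\bigr)$ in normalized units, using $\|\widehat A\|$ bounded in normalized coordinates. Perturbing $\sigma_\perp$ and $\marginrec$ by \eqref{eq:margin-lipschitz} gives the factor $1/\max\{1,\kappa\}$.

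\emph{Part (ii).} Take any oracle record at level $\eps$. The oracle may choose $w=0$, which gives an error-free record, and certifiability then forces $\eps\sqrt T<\marginrec(X,Y)\le\|q^*(X-\tau Y)\|$. Along the trajectory $Y=AX+BU$, so $X-\tau Y=(I-\tau A)X-\tau BU$. I will choose $\tau=1/z$ with $|z|\ge1$ near an eigenvalue, or on the circle, and take $q$ to be a left vector that annihilates the factors of $p_A$ other than the nearest root. Writing $x_t=\sum_k A^kBu_{t-1-k}$ and reducing $A^k$ modulo $p_A$, the quantity $q^*(X-\tau Y)$ collapses to terms of the form $q^*B\,u$ and $q^*AB\,u$, up to factors controlled by $1/b_\partial$. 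This gives a bound $\|q^*(X-\tau Y)\|\le\beta'\,\|q^*\mathcal C\|\,\|U\|$. Choosing $q$ as the left singular vector for $\smin(\mathcal C)$, projected appropriately, together with $\|U\|\le\sqrt{mT}$, yields $\eps\le\sqrt m\,\beta\,\smin(\mathcal C)$ with $\beta$ depending on $n$, $M$, and $b_0$. This requires a delicate choice of $q,\tau$ so that the spectral factors are bounded using $b_\partial\ge b_0$ and $\rho\le M$.
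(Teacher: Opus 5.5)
\textbf{Part~(ii) has a genuine gap.} You use only the rank-one pencil bound \eqref{eq:margin-pencil} with a single pair $(q,\tau)$. That works, via Parseval, when the projected response is a stable filter and one root factor is cancelled; this is exactly the proof of Corollary~\ref{thm:stable}. It cannot handle roots outside the unit disk, nor several roots near the circle at once.

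Concretely, take $n=2$, $m=1$, $A=\begin{bmatrix}2&a\\0&2\end{bmatrix}$, $B=e_2$, $0<a\ll1$. Then $\rho(A)=2$, $b_\partial(A)=1$ and $\smin(\mathcal C)\asymp a$, so the theorem asserts $\Gamma(A,B)=O(a)$. On an error-free record, $q^*(X-\tau Y)=q^*(I-\tau A)X-\tau q^*BU$.
\begin{itemize}
\item If $q^*(I-\tau A)\ne0$, the admissible policy $u_t\equiv1$ makes this quantity grow exponentially in $T$.
\item The exceptional choice $q^*(I-\tau A)=0$ forces $\tau=1/2$ and $q\propto e_2$, and gives only $\eps<1/2$.
\item A short computation shows that nearby choices do no better once $T\gg\log(1/a)$.
\end{itemize}
So no argument built on rank-one pencils alone can reach $O(a)$. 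In particular, your intermediate estimate $\|q^*(X-\tau Y)\|\le\beta'\|q^*\mathcal C\|\,\|U\|$ is false in this class.

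Two ingredients are missing.

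First, you need a different obstruction for the unstable spectral part. The error-free compatibility set contains $(A,B+\Delta B)$ for every $\|\Delta B\|\le\eps/\sqrt m$. The certificate $(K,P)$, extended to complex perturbations by the S-lemma, therefore forces $\sup_{|z|=1}\|K(zI-D)^{-1}\|<\sqrt m/\eps$ with $D=A+BK$. The resolvent identity $v^*=\frac1{2\pi i}\oint_{|z|=1}v^*(zI-A)^{-1}BK(zI-D)^{-1}\,dz$ then gives $\eps\|v\|\le\sqrt m\,\beta_0(\sum_k\|v^*A^kB\|^2)^{1/2}$ for rows $v^*$ in the unstable spectral subspace.

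Second, your phrase \emph{projected appropriately} hides most of the work. The weak direction must be split as $q^\top=\sum_jq^\top\psi_j(A)$, with Chinese-remainder polynomials for three kinds of group:
\begin{itemize}
\item the stable group, bounded by Parseval with $\tau=0$;
\item the unstable group, bounded by the small-gain argument above;
\item each near-circle root, bounded by the modal pencil bound. The condition $b_\partial(A)\ge b_0$ makes these roots simple and uniformly separated from the others.
\end{itemize}
The pieces are then recombined with coefficient matrices $R_j$ whose norms are bounded by compactness, while roots may still collide within a group. Projecting onto individual eigenvectors instead would bring in eigenvector conditioning, which $\beta(n,M,b_0)$ is not allowed to depend on.

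Part~(i) follows the paper's route: normalize, use compactness plus Theorem~\ref{thm:information} to get a uniform positive error-free margin, and transfer it by \eqref{eq:margin-lipschitz}. The paper normalizes with the Gramian factor $LL^\top=\mathcal C\mathcal C^\top$, where $\|\bar A\|\le c_{n,M}$ is immediate. It then shows that $\|L^{-1}S\|$ and $\|S^{-1}L\|$ are uniformly bounded. Your data-normalized version is repairable but loose in three places:
\begin{itemize}
\item For $m\ge2$ the columns of $X$ contain $A^kBe_j$ with $k\ge n$. So $\Pi$ is not a fixed $0/1$ matrix; it comes from Cayley--Hamilton and is block unit-triangular with bounded off-diagonal blocks.
\item Process errors propagate through powers of $\bar A$, so the normalized shift is $O(\delta_w/\smin(\mathcal C))$ only after the normalized pair is bounded. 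That bound is also what identifies limit records as error-free records of controllable limit plants.
\item $S$ is computed from the noisy record, so it must be compared with the error-free normalization before the Lipschitz bound applies.
\end{itemize}
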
 \par
Combining both parts under the conditions of part~(ii) gives \begin{equation*} \frac{\alpha\smin(\mathcal C)}{\max\{1,\kappa\}}\le E_{\mathrm{meas}}^\kappa\le\Gamma(A,B)\le\sqrt m\,\beta\smin(\mathcal C). \end{equation*} Thus the measured pulse sequence retains at least the fraction $\alpha/(\max\{1,\kappa\}\sqrt m\,\beta)$ of $\Gamma$; for exact states, drop the factor $\max\{1,\kappa\}$. These constants require neither a bound on $\|A\|$ nor a bound on the conditioning of its eigenvectors, and arbitrary stable and unstable Jordan structure is allowed away from the unit circle. The spectral condition is sufficient but not necessary; Section~\ref{sec:examples}-A gives competitive families that violate it. \par
\emph{Proof idea.} Factor the controllability Gramian as $LL^\top=\mathcal C\mathcal C^\top$ and use the analysis coordinates $\bar x=L^{-1}x$. Then \[ \begin{gathered} \overline{\mathcal C}\,\overline{\mathcal C}^{\top}=I, \quad\|L^{-1}\|=1/\smin(\mathcal C),\\ \|L^{-1}W\|\le\delta_w/\smin(\mathcal C), \quad\|L^{-1}V\|\le\delta_v/\smin(\mathcal C), \end{gathered} \] where $\overline{\mathcal C}=L^{-1}\mathcal C$. Cayley--Hamilton bounds the normalized matrices using only $n$ and $M$. Together with the displayed identity, this places them in a compact family of controllable pairs. The error-free pulse records then have uniformly positive certification margins. The weak direction enters only through the error amplification $1/\smin(\mathcal C)$. This normalization depends on the unknown plant and is only for analysis; the procedure uses the observed data ($S$ above). \par
For the oracle bound, consider the error-free run of an arbitrary causal policy at a certified level; its record must still be certifiable. Polynomial (Chinese-remainder) identities split the roots of $p_A$ into three groups: inside, outside, and near the unit circle. Filtered responses bound the stable group, robustness to actuator perturbations bounds the unstable group, and the pencil bound \eqref{eq:margin-pencil} bounds each isolated root near the circle. Uniformly bounded polynomial coefficients then recombine these estimates along the least controllable direction. At no point is the norm of an eigenvector matrix or spectral projector estimated. The bound holds for every duration and policy, hence for $\Gamma$. Weak controllability can still reduce the absolute tolerance, but the theorem shows that it limits the oracle on the same scale. \par
\subsection{Stable plants and the stability gap} For stable plants with $\rho(A)\le r<1$, the upper bound depends explicitly on the stability gap $1-r$ and the minimal-polynomial degree $d$. \par
\begin{corollary}[Stable-disk comparison]\label{thm:stable} If $(A,B)$ is controllable and $\rho(A)\le r<1$, then \begin{equation} \Gamma(A,B)\le\sqrt{m\tbinom{2d-2}{d-1}}\,\frac{\smin(\mathcal C_d)}{(1-r)^{d-1}}. \label{eq:stable-ceiling} \end{equation} \end{corollary}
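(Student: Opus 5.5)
The plan is to bound the tolerance of an arbitrary policy by looking at its error-free run and applying the pencil bound \eqref{eq:margin-pencil} with $\tau=0$, then estimate the state rows through the reduction of powers of $A$ modulo $p_A$. Fix $T\ge1$, a causal policy $\pi$, and a certified level $\eps$ with $\delta_w=\eps\sqrt T$. The run with $W=0$ produces a record satisfying $Y=AX+BU$ exactly, so this record must be certifiable for every smaller level. By the necessity part of Lemma~\ref{lem:margin}, this gives $\eps\sqrt T\le\marginrec(X,Y)$. By \eqref{eq:margin-pencil} with $\tau=0$, we get $\marginrec(X,Y)\le\|q^*X\|$ for every unit $q$. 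I will choose $q$ to be a left singular vector of $\mathcal C_d$ attaining $\|q^*\mathcal C_d\|=\smin(\mathcal C_d)$. It then remains to show
\[
\|q^*X\|\le\sqrt{m\tbinom{2d-2}{d-1}}\,\frac{\smin(\mathcal C_d)}{(1-r)^{d-1}}\,\sqrt T ,
\]
after which taking suprema over $\eps$, $\pi$, and $T$ gives \eqref{eq:stable-ceiling}.

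Since $x_0=0$, we have $x_t=\sum_{k\ge0}A^kBu_{t-1-k}$. Reducing $z^k$ modulo $p_A$ gives $A^k=\sum_{j=0}^{d-1}c_{k,j}A^j$, so that
\[
q^*x_t=\sum_{j=0}^{d-1}(q^*A^jB)\,v_j(t),\qquad v_j(t)=\sum_k c_{k,j}u_{t-1-k}.
\]
Each $v_j$ is a causal convolution of the input sequence with the scalar kernel $c_{\cdot,j}$. Young's inequality, applied row by row, bounds the $\ell_2$ norm over the horizon by the $\ell_1$ norm of the kernel times that of the inputs. Combined with $\|u_t\|_\infty\le1$, which gives $\|U\|_F\le\sqrt{mT}$, this yields $\|V_j\|_F\le\|c_{\cdot,j}\|_1\sqrt{mT}$, where $V_j$ is the $m\times T$ array of $v_j$. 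Cauchy--Schwarz over $j$ then gives
\[
\|q^*X\|\le\Bigl(\sum_j\|q^*A^jB\|^2\Bigr)^{1/2}\Bigl(\sum_j\|c_{\cdot,j}\|_1^2\Bigr)^{1/2}\sqrt{mT}.
\]
The first factor equals $\|q^*\mathcal C_d\|=\smin(\mathcal C_d)$. The binomial constant therefore comes from the Vandermonde-type identity $\sum_j\binom{d-1}{j}^2=\binom{2d-2}{d-1}$, provided I can prove the kernel estimate
\[
\|c_{\cdot,j}\|_1\le\binom{d-1}{j}(1-r)^{-(d-1)}.
\]

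The kernel estimate is the main obstacle. My first attempt is a generating-function computation. The series $\sum_k c_{k,j}z^k$ is a rational function with denominator $z^d p_A(1/z)=\prod_i(1-\lambda_iz)$. The $\ell_1$ norm of the coefficients of $1/\prod_i(1-\lambda_iz)$ is at most $(1-r)^{-d}$. The numerator coefficients are signed elementary symmetric functions of the roots, bounded by binomials times powers of $r$. Used naively, this gives the exponent $d$ instead of $d-1$. To recover $d-1$, I would instead expand in the Newton basis $A^k=\sum_j h_{k-j}(\lambda_1,\ldots,\lambda_{j+1})\prod_{i\le j}(A-\lambda_i)$, whose complete homogeneous kernels have $\ell_1$ norm at most $(1-r)^{-(j+1)}$. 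The leading term $k=j$ is then absorbed, since the delay in $x_t$ means only $A^kB$ with coefficient shifts matter. Finally, I would convert the Newton products back to monomials, with coefficients $\binom{j}{i}r^{j-i}$, and resum, checking that $\sum_j\binom{j}{i}r^{j-i}(1-r)^{-j}$-type sums collapse to $\binom{d-1}{i}(1-r)^{-(d-1)}$.

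If exact constants prove elusive along this route, an alternative is to pick $\tau$ in \eqref{eq:margin-pencil} and filter the record by a degree-$(d-1)$ polynomial in the shift. This factors $p_A$ differently and may produce the exponent $d-1$ directly.
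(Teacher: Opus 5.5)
\textbf{The choice $\tau=0$ leaves a genuine gap.} Most of your skeleton matches the paper's: the error-free run, the necessity bound $\eps\sqrt T\le\marginrec(X,Y)$, the pencil bound \eqref{eq:margin-pencil}, a weakest left singular vector $q$ of $\mathcal C_d$, reduction of powers of $A$ modulo $p_A$, and $\sum_k\binom{d-1}{k}^2=\binom{2d-2}{d-1}$. But with $\tau=0$, the inequality you reduce to, $\|q^*X\|\le\sqrt{m\binom{2d-2}{d-1}}\,\smin(\mathcal C_d)(1-r)^{-(d-1)}\sqrt T$, is \emph{false}, so no bookkeeping can establish it.

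Take $n=m=d=1$, $A=r\in(0,1)$, $B=1$, and $u_t\equiv1$. Then $x_t=(1-r^t)/(1-r)$, so $\|X\|/\sqrt T\to1/(1-r)$, whereas your right-hand side equals $\sqrt T$. Your kernel estimate fails already here: $c_{k,0}=r^k$ has $\ell_1$ norm $1/(1-r)$, not $1$. For a double root with $d=2$ one gets $c_{k,1}=kr^{k-1}$, with $\ell_1$ norm $(1-r)^{-2}$ rather than $(1-r)^{-1}$. With $\tau=0$, the exponent $d$ you found ``naively'' is genuinely present in $\|q^*X\|$, and the Newton-basis ``absorption'' of the leading term cannot remove it.

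\textbf{The missing idea.} Take $\tau=\overline{\lambda}_1$ for a root $\lambda_1$ of $p_A$; this is admissible because $|\lambda_1|\le r<1$. The sequence $q^*(x_t-\tau x_{t+1})$ has transfer function $(1-\tau z)\,q^*(zI-A)^{-1}B=(1-\tau z)\sum_k\phi_k(z)\,q^*A^kB/p_A(z)$. On $|z|=1$ one has $|1-\overline{\lambda}_1z|=|z-\lambda_1|$, which cancels exactly one factor of $p_A$.

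This cancellation holds only in modulus on the circle: it is an all-pass factor, not a polynomial identity. So the filtered sequence must be bounded via Parseval, that is, by the supremum of the transfer function over $|z|=1$ times $\|U\|_F\le\sqrt{mT}$. Young's inequality with $\ell_1$ kernel norms does not see the cancellation. For $d=1$ the impulse response of $(1-\overline{\lambda}_1z)/(z-\lambda_1)$ has $\ell_1$ norm $1+2|\lambda_1|$, so you would not recover the stated constant.

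Finally, $\phi_k(z)$ is the $w^k$ coefficient of $(p_A(z)-p_A(w))/(z-w)=\sum_{j=1}^d\prod_{\ell<j}(w-\lambda_\ell)\prod_{\ell>j}(z-\lambda_\ell)$. Dividing by $\prod_{\ell\ge2}(z-\lambda_\ell)$ bounds each coefficient by $\sum_{j=k}^{d-1}\binom{j}{k}r^{j-k}(1-r)^{-j}\le\binom{d-1}{k}(1-r)^{-(d-1)}$. This is where the binomials you wanted actually come from.

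Your fallback of filtering by a degree-$(d-1)$ polynomial in the shift is not available. The bound \eqref{eq:margin-pencil}, like the dual characterization of $\marginrec$, involves only the first-order combination $X-\tau Y$. It provides one cancelled factor, and that is exactly what lowers the exponent from $d$ to $d-1$.
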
 \par
No root separation or diagonalizability is assumed. Combined with Theorem~\ref{thm:competitive}(i) for $M=1$, the shortest measured sequence retains at least the fraction \[ \frac{\alpha(n,m,1)(1-r)^{d-1}} {\max\{1,\kappa\}\sqrt{m\tbinom{2d-2}{d-1}}} \] of $\Gamma$; for exact states, drop the factor $\max\{1,\kappa\}$. The proof writes the input-to-state response through the scalar recurrence defined by $p_A$ and compares consecutive states so as to cancel one factor of $p_A$ on the unit circle; the remaining $d-1$ factors produce the exponent. Section~\ref{sec:duration} shows that this exponent cannot be reduced, even for semisimple plants. The constant multiplying the gap power is not claimed to be optimal. \par
\section{The duration needed for competitiveness}\label{sec:duration} Let $(G,B)$ be a fixed controllable real pair, and consider the slow plants $A_h=I+hG$ as $h\downarrow0$. The \emph{controllability depth} of $(G,B)$ is \begin{equation} \nu=\min\{j:\operatorname{rank}[B\ GB\ \cdots\ G^{j-1}B]=n\}. \label{eq:depth} \end{equation} For $h>0$, $G$ and $A_h$ have the same minimal-polynomial degree $d$. The degree $d$ is the order of a scalar recurrence, whereas $\nu$ counts the input-response blocks needed to span the state space; in general $\nu\le d$, with equality for single-input pairs. If $\nu\ge2$, the input reaches some state directions only through successive powers of $G$. These directions remain reachable for every $h>0$, but their responses over any fixed number of steps are weak; they become visible on the time scale $1/h$. \par
\begin{theorem}[Necessary and sufficient duration order] \label{thm:duration} Fix $(G,B)$ with $\nu\ge2$ and fix $\kappa\ge0$. As $h\downarrow0$, with all quantities evaluated at $(A_h,B)$, \begin{equation}\label{eq:slow-scales} \smin(\mathcal C)\asymp E_{\mathrm{exact}}\asymp E_{\mathrm{meas}}^{\kappa}\asymp h^{\nu-1}, \quad\Gamma\asymp1. \end{equation} There are $\beta,h_0>0$ such that, for all $h\in(0,h_0]$, all $T\ge1$, and every deterministic causal policy $\pi$, \begin{equation} \eps_T^\kappa(\pi;A_h,B)\le\beta\min\{1,(hT)^{\nu-1}\}. \label{eq:slow-causal} \end{equation} Conversely, the predetermined sequence that holds $e_1,\ldots,e_m$ in turn, each for $N=\lceil1/h\rceil$ steps, has duration $mN$ and, for all sufficiently small $h$, a tolerance bounded below by a positive constant. \end{theorem}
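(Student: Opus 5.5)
We prove the four parts of \eqref{eq:slow-scales} in turn and then the two duration statements, each time reusing Theorem~\ref{thm:competitive} or the record margin of Lemma~\ref{lem:margin}.

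\emph{Controllability scale.} Write $A_h^k=\sum_{i\le k}\binom ki h^iG^i$. The block $[B\ A_hB\ \cdots\ A_h^{n-1}B]$ equals $[B\ hGB\ \cdots\ h^{n-1}G^{n-1}B]$ times an invertible block upper-triangular binomial matrix that does not depend on $h$. Hence $\smin(\mathcal C)\asymp\smin([B\ hGB\ \cdots])$. Filter the Krylov spaces $\mathcal K_j=\operatorname{im}[B\ \cdots\ G^{j-1}B]$. A direction in $\mathcal K_\nu\ominus\mathcal K_{\nu-1}$ is reached only with weight $h^{\nu-1}$, which gives $\smin=O(h^{\nu-1})$. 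Conversely, column scaling by $\diag(1,h^{-1},\ldots)$ and a block-triangular argument adapted to the filtration give $\smin=\Omega(h^{\nu-1})$.

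\emph{Short-experiment tolerances.} Because $\rho(A_h)\le1+h\|G\|\le2$, Theorem~\ref{thm:competitive}(i) with $M=2$ gives $E_{\mathrm{exact}},E^\kappa_{\mathrm{meas}}\gtrsim h^{\nu-1}$. For the matching upper bound, and for the first branch of \eqref{eq:slow-causal}, consider the error-free record of any causal policy. By \eqref{eq:margin-pencil} with $\tau=1$,
\[
\marginrec\le\|q^*(X-Y)\|=\|q^*(hGX+BU)\|.
\]
Take $q$ a unit vector in the left kernel of $[B\ GB\ \cdots\ G^{\nu-2}B]$; it exists because $\nu\ge2$ and is nonzero. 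Then $q^*B=0$, so $\marginrec\le h\|q^*GX\|$.

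\emph{Key step: bounding $\|q^*GX\|$.} Expand $x_t=\sum_s A_h^{t-1-s}Bu_s$ and use $q^*G^jB=0$ for $j\le\nu-2$. Every surviving term then carries $h^{\nu-2}$ times a binomial factor $\binom{t-1-s}{\nu-2}$ or larger. This gives $|q^*Gx_t|\lesssim h^{\nu-2}\,t^{\nu-1}(1+h)^t$ and hence
\[
\|q^*GX\|\lesssim\sqrt T\,h^{\nu-2}T^{\nu-1}e^{chT}.
\]
Combined with $\delta_w=\eps\sqrt T$ and the necessity part of Lemma~\ref{lem:margin}, we get $\eps_T\le\marginrec/\sqrt T\lesssim(hT)^{\nu-1}$ for $hT\le1$. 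At the short durations this yields $E\lesssim h^{\nu-1}$, and the measured case follows from $\eps^\kappa\le\eps^0$.

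For $hT\ge1$ the claim is $\eps_T\le\beta$. The cap must come from a fixed direction rather than from the growing exponential factor. Take $q$ a left eigenvector of $G$ for an eigenvalue $\lambda$ with largest real part and set $\tau=1/(1+h\lambda)$, or use the unstable-factor argument behind Theorem~\ref{thm:competitive}(ii). Then $q^*(X-\tau Y)=-\tau q^*BU$, so $\marginrec\le\|q^*B\|\sqrt{mT}$ with $|\tau|\le1$ when $\operatorname{Re}\lambda\ge0$. If $\operatorname{Re}\lambda<0$, choose $\tau=1$ and bound $q^*X$ by the filtered stable response. In both cases $\eps_T$ is bounded uniformly. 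This also gives $\Gamma=O(1)$. Checking that this bound really is uniform in $T$ and $h$ is the step I expect to be the main obstacle; the paper's Chinese-remainder decomposition is the fallback.

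\emph{Sufficiency and $\Gamma=\Omega(1)$.} Both follow from the held-pulse sequence. Holding $e_j$ for $N=\lceil1/h\rceil$ steps gives responses $\sum_{k<N}A_h^kBe_j\approx h^{-1}\int_0^1e^{sG}\,ds\,Be_j$, and the later states follow $e^{\theta G}$ on the slow time $\theta=ht$. After dividing by $\sqrt T\asymp h^{-1/2}$ in the Gramian sense, the normalized data $X/\sqrt T$ and $Y/\sqrt T$ converge to continuous-time integrals against the pair $(G,B)$. The limiting Gramian is positive definite because $(G,B)$ is controllable. Passing to the limit in the exact-state margin is not automatic, since $\marginrec(X,Y)\le\|q^*(X-Y)\|$ is small to first order in $h$. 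Instead, work through the construction of Theorem~\ref{thm:information}: pick $\Theta$ with $X\Theta=I$ and $Y\Theta$ Schur with margin $\asymp h$, and use $\|\Theta\|\asymp h^{1/2}$ from the limiting Gramian. Then the closed loop $A'+B'K=Y\Theta-W'\Theta$ tolerates $\|W'\|\|\Theta\|\lesssim h$, that is, $\eps\sqrt T\,h^{1/2}\lesssim h$, so $\eps\lesssim1$.

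The measured case should follow in the same way through Lemma~\ref{lem:measured}. Here $\sigma_\perp\gtrsim\sqrt T$ because the input is held piecewise constant and the state rows carry $n$ independent directions beyond $\operatorname{row}U$ in the limit. With $\|\widehat A\|\le2$, the enclosure radius $\eta$ stays $O(\eps\sqrt T)$.
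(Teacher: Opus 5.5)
There is a genuine gap in the attainment step, which is what gives the converse statement and $\Gamma=\Omega(1)$. It fails as written because a factor of $h$ is lost. With your $\|\Theta\|\asymp h^{1/2}$ and $\sqrt T\asymp h^{-1/2}$, the condition $\eps\sqrt T\,h^{1/2}\lesssim h$ yields only $\eps\lesssim h$, not $\eps\lesssim1$. The number $h^{1/2}$ is $\|Z_h^\dagger\|$ for the rescaled regressor $Z_h=[hX;U]$, whose Gramian obeys $hZ_hZ_h^\top\to\int_0^m[y;u][y;u]^\top\,ds\succ0$. The states themselves have size $1/h$, so it is $hX/\sqrt T$, not $X/\sqrt T$, that has a limit. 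Hence $\|X\|\asymp h^{-3/2}$, and every right inverse satisfies $\|\Theta\|\ge1/\|X\|\gtrsim h^{3/2}$. The useful choice is the right inverse with $U\Theta=hK_{\mathrm{ct}}$, where $K_{\mathrm{ct}}$ is a continuous-time stabilizing gain for $(G,B)$. That is $\Theta=hZ_h^\dagger[I;K_{\mathrm{ct}}]$, with $\|\Theta\|=O(h^{3/2})$, and then $Y\Theta=I+h(G+BK_{\mathrm{ct}})$ has a Lyapunov decrease of order $h$. The perturbation $\|W'\Theta\|\le\eps\sqrt T\,\|\Theta\|=O(\eps h)$ competes with that decrease, and only this balance gives a constant tolerance.

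Even with the scaling fixed, two pieces are missing.
\begin{enumerate}
\item The tolerance quantifies over every noisy physical record, not only the error-free one. On a noisy record, $Y\Theta=A_h+hBK_{\mathrm{ct}}+W\Theta$ and $\Theta$ is built from the perturbed $X$. You therefore need $\smin([hX;U])\gtrsim h^{-1/2}$ uniformly over records. This follows from $\|h(X-X^0)\|=O(\delta_w)$, since $hT$ is bounded, which is exactly the paper's step.
\item In the measured case, your bounds on $\sigma_\perp$ and $\|\widehat A\|$ control only $\eta$. The test \eqref{eq:measured-test} also needs $\marginrec(\widetilde X,\widehat Y)>\eta$, which is precisely the kind of margin you said cannot be passed to the limit. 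You can get it from the same $\Theta$ argument applied to $\Sigma_\eta(\widetilde X,\widehat Y,U)$. First use the least-squares formula and $\smin([h\widetilde X;U])\gtrsim h^{-1/2}$ to show $\widehat A=A_h+O(\eps h)$ and $\widehat B=B+O(\eps)$. The paper instead bounds the rescaled compatible pairs directly, $\|[G'-G\ \ B'-B]\|\le\beta_1(\eps_w+\eps_v)$, and uses one continuous-time certificate with gain $hK_{\mathrm{ct}}$.
\end{enumerate}

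Your upper-bound half is correct and takes a different route from the paper. You apply the pencil bound \eqref{eq:margin-pencil} with $\tau=1$ to the error-free record. The paper instead builds the disturbance $w_t=-hqq^\top Gx_t$, which makes the record error-free for the unstabilizable pair $I+h(I-qq^\top)G$, $B$. The binomial bookkeeping is the same in both.

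For $hT\ge1$, your dichotomy does close, so the obstacle you flagged is not real. If some eigenvalue of $G$ has $\operatorname{Re}\lambda\ge0$, the modal choice gives $\eps\le\sqrt m\|B\|$. Otherwise $G$ is Hurwitz, and a Lyapunov matrix gives $\|(I+hG)^k\|\le C(1-ch)^k$ uniformly for small $h$. That yields $\|X\|\lesssim h^{-1}\sqrt T$ and hence $\|q^*(hGX+BU)\|\lesssim\sqrt T$. The paper's cancellation disturbance $w_t=-Bu_t$ does the same in one line: it keeps $x_t=0$, costs $\sqrt{mT}\|B\|$, and leaves $(I,0)$ compatible.
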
 \par
Every experiment attaining a fixed positive fraction of $\Gamma$ therefore needs duration $\Omega(1/h)$, and a sequence of duration $O(1/h)$ attains some fixed positive fraction. The result covers stable, unstable, semisimple, and defective generators. The attaining sequence must know the time scale $1/h$ to choose its block length; learning an unknown time scale is a separate question. The conclusion identifies the required duration scale, without specifying the entire optimal tolerance-versus-duration curve. \par
The upper bound \eqref{eq:slow-causal} comes from a second plant that produces the same record and cannot be stabilized. Choose a real unit vector $q$ with $q^\top[B\ GB\ \cdots\ G^{\nu-2}B]=0$, and let the true process disturbance be $w_t=-hqq^\top Gx_t$, with zero sensor error. The resulting trajectory is also an error-free trajectory of \begin{equation}\label{eq:slow-alternative} A'=I+h(I-qq^\top)G,\quad B'=B, \end{equation} which satisfies $q^\top A'=q^\top$ and $q^\top B'=0$. This alternative has an uncontrollable eigenvalue at one, so no gain can stabilize it. Since $w_t$ depends only on the current state, the construction applies along the run of every causal policy. The depth condition makes the normalized disturbance size $\|W\|/\sqrt T$ of order at most $(hT)^{\nu-1}$ when $hT\le1$ (\appref{app:duration}). Thus the necessary duration is not a limitation of a particular certificate. \par
For the matching lower bound, hold each input direction for about $1/h$ steps. For error-free data, $hx_t$ approaches a trajectory of $\dot y=Gy+Bu$, with positive definite state-input Gramian by controllability. Small error levels perturb this approximation uniformly over all physical records and small $h$. Every compatible rescaled pair $((A'-I)/h,B')$ therefore stays close to $(G,B)$. One continuous-time certificate then yields a common discrete-time certificate and positive tolerance uniformly for small $h$. \par
\emph{Sharpness of the stable-gap exponent.} For $n\ge2$, take a controllable single-input pair $(G,B)$ with $G$ diagonal and distinct negative eigenvalues. Then $d=\nu=n$ and $1-\rho(I+hG)\asymp h$, while Theorem~\ref{thm:duration} gives $\smin(\mathcal C_d)\asymp h^{d-1}$ and $\Gamma\asymp1$. Replacing the exponent $d-1$ in \eqref{eq:stable-ceiling} by any smaller one would make its right-hand side tend to zero as $h\downarrow0$, contradicting $\Gamma\asymp1$. Thus $d-1$ is sharp already for semisimple plants. \par
\section{Examples}\label{sec:examples} \subsection{Weak links and slow actuation} A small controllability scale $\smin(\mathcal C)$ can mean that every experiment is sensitive to noise, or that a short experiment misses a slower response. Fix an error ratio $\kappa\ge0$, and consider the following two-state, single-input families as $\mu,h\downarrow0$, with $0<\mu\le1$: \begin{alignat*}{3} &\text{Weak link:}\quad&A_\mu&=\begin{bmatrix}0&\mu\\0&0\end{bmatrix}, &\quad B&=\begin{bmatrix}0\\1\end{bmatrix},\\[1mm] &\text{Slow actuation:}\quad&A_h&=\begin{bmatrix}1-h&0\\0&1-2h\end{bmatrix}, &\quad B&=\begin{bmatrix}1\\1\end{bmatrix}. \end{alignat*} In both cases $m=1$ and $n=2$, so $U_{\mathrm{exact}}=U_{\mathrm{meas}}=(1,0,0)$, and the relevant tolerance is $E^\kappa_{\mathrm{meas}}$. The comparison with the oracle is: \begin{center} \begin{tabular}{@{}lccc@{}} \toprule Family & $E_{\mathrm{meas}}^\kappa$ & $\Gamma$ & $E_{\mathrm{meas}}^\kappa/\Gamma$\\ \midrule Weak link & $\asymp\mu$ & $\asymp\mu$ & $\asymp1$\\ Slow actuation & $\asymp h$ & $\asymp1$ & $\asymp h$\\ \bottomrule\end{tabular} \end{center} For the weak link, $\smin(\mathcal C)=\mu$, while the minimal polynomial is $z^2$ and $b_{\partial}(A_\mu)=1$. Theorem~\ref{thm:competitive} therefore places $E^\kappa_{\mathrm{meas}}$ and $\Gamma$ on the same scale $\mu$: longer experiments cannot improve the order of the tolerance, because the weak link limits even an experiment designed with complete knowledge of the plant. \par
The slow-actuation family has generator $G=\diag(-1,-2)$ and depth $\nu=2$. By Theorem~\ref{thm:duration}, $E^\kappa_{\mathrm{meas}}/\Gamma\asymp h\to0$, and order $1/h$ transitions are necessary, and suffice, to recover a fraction of $\Gamma$ bounded away from zero. Both families have weak finite controllability and the same shortest universal experiment, but only the second gains an unbounded factor from a longer experiment. This distinguishes absolute robustness from the cost of using the shortest possible experiment. \par
Actuation also explains why the spectral condition is only sufficient. With the same $A_h$ but two direct inputs, $B=I_2$, one has $\smin(\mathcal C)\ge1$ although \mbox{$b_{\partial}(A_h)=2h\to0$}. The lower bound of Theorem~\ref{thm:competitive}(i) stays positive, while cancelling the input by $w_t=-Bu_t$ shows $\Gamma\le\sqrt2$ (\appref{app:duration}). These fully actuated plants therefore remain uniformly competitive despite the crowded spectrum. \par
\subsection{A nonnormal oscillator} Consider \[ A_k=\begin{bmatrix}0&-k\\ k^{-1}&0\end{bmatrix},\quad B=\begin{bmatrix}1\\0\end{bmatrix},\quad k\ge1. \] Although $\|A_k\|=k\to\infty$, its eigenvalues remain $\pm i$. Indeed, $A_k^2=-I$, $b_{\partial}(A_k)=\sqrt2$, and $\mathcal C_2=\operatorname{diag}(1,k^{-1})$, so $\smin(\mathcal C)=k^{-1}$. For every fixed error ratio $\kappa$, \[ E_{\mathrm{exact}}(A_k,e_1) \asymp E_{\mathrm{meas}}^\kappa(A_k,e_1) \asymp\Gamma(A_k,e_1)\asymp k^{-1}. \] The shortest sequence $(1,0,0)$ therefore remains uniformly competitive despite $\|A_k\|\to\infty$. For a direct lower bound, the analysis factor $L_k=\diag(1,k^{-1})$ satisfies $L_k^{-1}A_kL_k=A_1$ and $L_k^{-1}B=B$. The fixed rotation pair $(A_1,e_1)$ has positive tolerance for the pulse sequence, while the transformed error bounds grow by at most $k$. Conversely, the disturbance $w_t=(0,-x_{1,t}/k)^\top$ keeps $x_{2,t}=0$ and $x_{1,t+1}=u_t$. Its normalized size $\|W\|/\sqrt T$ is at most $k^{-1}$, and the record also fits the unstabilizable pair \[ A'_k=\begin{bmatrix}0&-k\\0&1\end{bmatrix},\quad B'=B=\begin{bmatrix}1\\0\end{bmatrix}. \] For this family, increasing $k$ reduces the shortest-experiment tolerance and the oracle tolerance at the same rate. \par
\section{Conclusions}\label{sec:conclusions} Minimal universal duration and competitive duration answer different questions. Sensor uncertainty changes the information requirement and the exact shortest predetermined experiment. Once that information is present, bounded spectral radius and the stated root separation make the shortest sequences competitive with plant-informed causal experimentation of arbitrary fixed duration. Slow actuation can defeat this comparison, even for diagonal dynamics, and then the controllability depth determines the loss of the shortest sequences while $1/h$ determines the required duration scale. The results leave open adaptive multi-input minimum lengths, sharper competitive constants, and experiments that adapt their duration to an unknown time scale. \par
\appendix[Proofs] \label{app:proofs} Theorem~\ref{thm:competitive} is proved in two stages: the uniform lower bound in \appref{app:short-floor} and the oracle upper bound in \appref{app:boundary}. The intervening stable-disk proof in \appref{app:stable} develops the transfer identity used in the oracle argument. \par
\subsection{Proof of Theorem~\ref{thm:information}}\label{app:information} If $a^\top X=0$ for some $a\ne0$, all plants $(A+tva^\top,B)$ fit the exact record. For a fixed gain $K$, choose $v$ with $a^\top v\ne0$. Their closed-loop traces are unbounded as $t$ varies, whereas a Schur $n\times n$ matrix has trace of modulus less than $n$. This proves exact-state necessity. \par
Conversely, assume $X$ has full row rank and put $\Pi=I-X^\dagger X$. The identity $Y=(YX^\dagger)X+Y\Pi$ realizes the given zero-start sequence as a trajectory of $(YX^\dagger,Y\Pi)$ with input $e_{t+1}\in\R^T$ at time $t$, the corresponding standard basis vector. Its reachable subspace contains every column of $X$, so this pair is controllable. Choose $K_0$ so that $Y\Theta$ is Schur, where $\Theta=X^\dagger+\Pi K_0$. Then $X\Theta=I$, and $K=U\Theta$ gives \[ A'+B'K=Y\Theta-W'\Theta\] for every compatible alternative. A strict Lyapunov inequality for $Y\Theta$ persists for all sufficiently small $\|W'\|$, proving the first equivalence. \par
If \eqref{eq:information-rank} fails, a left-null vector of the regressor exists with nonzero state component: $a^\top X+c^\top U=0$, $a\ne0$. Fix any $\delta_v>0$. For all $s$ in a sufficiently small open interval containing one, the underlying trajectory $x'_t=sx_t$ has sensor error $(1-s)[x_0\ \cdots\ x_T]$ of norm less than $\delta_v$. For every $v\in\mathbb R^n$, it is explained with zero process error by \[ A'=A+va^\top,\quad B'=s(B+vc^\top). \] A common gain $K$ would require $a^\top+s c^\top K=0$ for every such $s$, since otherwise the closed-loop trace is unbounded in $v$. Two distinct values of $s$ force $c^\top K=0$ and $a=0$, contradicting the chosen nonzero state component. \par
Finally, suppose \eqref{eq:information-rank} holds. Choose orthonormal columns $Q$ spanning $\operatorname{im}U$ and write $U=Q U_r$. The reduced regressor $[X;U_r]$ has full row rank. All columns of $X$ are reachable under $(A,BQ)$, so this pair is controllable. Lemma~\ref{lem:measured} places the reduced compatible pairs in a neighborhood shrinking to $(A,BQ)$ as both error bounds tend to zero. A certificate with a stabilizing reduced gain $K_r$ persists on this neighborhood; use $K=QK_r$ in the original input coordinates. For a fixed finite input sequence, the physical and measured data blocks depend continuously on the error arrays. The required ranks, controllability, and strict Lyapunov inequalities therefore persist for every sufficiently small admissible perturbation. This proves local robustness for all records in one neighborhood. \qed\par
\subsection{Proof of Theorem~\ref{thm:minimal}}\label{app:experiments} We first verify the two pulse sequences. For the measured sequence, a relation $a^\top X+c^\top U=0$ gives $c_1=0$ at $t=0$, followed by $a^\top A^kBe_1=0$, $0\le k<n$, on the next $n$ columns. Cayley--Hamilton extends these equalities to every $k$. The response to the first pulse is therefore annihilated at all later times. The next pulse column gives $c_2=0$, and its following columns treat $Be_2$. Induction and controllability yield $a=c=0$. For the exact sequence, a relation $a^\top X=0$ gives the same $n$ responses for each actuator. The state at the next pulse time is recorded before that pulse acts, so it supplies the preceding actuator's last required response. For the last actuator, this response is in the final current-state column $x_{mn}$. Thus controllability gives $a=0$. The local robustness assertion in Theorem~\ref{thm:information} now gives positive tolerance for each plant. \par
For the lower bounds, we realize a polynomial relation by a controllable stable diagonal plant. Let $p(z)\in\R[z]^{1\times m}$ be nonzero with $\deg p\le n$. Choose distinct nonzero $\lambda_1,\ldots,\lambda_n$ in the prescribed disjoint nonempty open intervals, avoiding the finitely many common zeros of the components of $p(1/\lambda)$. Set \[ A=\diag(\lambda_1,\ldots,\lambda_n),\quad B_i=\frac{\lambda_i p(1/\lambda_i)} {\prod_{j\ne i}(1-\lambda_j/\lambda_i)}, \] where $B_i$ is row $i$ of $B$. With $q_A(z)=\prod_{i=1}^n(1-\lambda_i z)$, this gives \[ \frac{p(z)}{q_A(z)} =p(0)+z\mathbf1_n^\top(I-zA)^{-1}B. \] The identity follows by evaluating the degree-$n$ numerators at $0,1/\lambda_1,\ldots,1/\lambda_n$. Each row of $B$ is nonzero; since the eigenvalues are distinct, the pair is controllable by PBH. \par
Fix an input sequence and put $u(z)=\sum_{t<T}u_tz^t$. If $T<m(n+1)$, the $m(n+1)$ coefficients of $p$ can solve the $T$ homogeneous equations $p(z)u(z)=0\pmod{z^T}$ nontrivially. The zero-start state series is $x(z)=z(I-zA)^{-1}Bu(z)$. Since $q_A(0)=1$, division by $q_A$ preserves the first $T$ vanishing coefficients. The realization therefore gives $\mathbf1_n^\top x_t+p(0)u_t=0$ for $t<T$. This relation has a nonzero state row, so, by Theorem~\ref{thm:information}(2), the error-free record is not certifiable for any $\delta_v>0$. \par
For exact states, if $T<mn+1$, choose a nonzero row polynomial $q$ of degree at most $n-1$ such that $q(z)u(z)$ has its first $T-1$ coefficients zero. Such a polynomial exists because there are $mn>T-1$ unknown coefficients. Taking $p=zq$ in the same realization gives $\mathbf1_n^\top X=0$. By the necessity proof of Theorem~\ref{thm:information}(1), no gain stabilizes all exact-state explanations, even without process errors. Finally, any causal run with $T\le n$ has $\rank X\le T-1<n$, because $x_0=0$. For one input, the pulse sequence attains the resulting lower bound. \qed\par
\subsection{Proof of Lemma~\ref{lem:margin}}\label{app:records} The SDP is feasible at $H=0,t=0$, and its nonnegative feasible levels and matrices are bounded by the top-left block. Its maximum is therefore finite and attained. If $\delta_w<\marginrec(X,Y)$, an optimizer $H_*$ gives the positive definite matrix \[ H=H_*+\tfrac12\bigl(\marginrec(X,Y)^2-\delta_w^2\bigr)I. \] The block in \eqref{eq:record-margin} is then strictly positive at $t=\delta_w^2$. Put \[ \begin{gathered} Q_H=I+Y^\top H^{-1}Y,\quad R_H=XQ_H^{-1}X^\top,\\ \Theta=Q_H^{-1}X^\top R_H^{-1}. \end{gathered} \] With $D=Y\Theta$, Woodbury's identity and a Schur complement give \mbox{$R_H\succ H+\delta_w^2I$}, $X\Theta=I$, and \[ \Theta^\top\Theta+D^\top H^{-1}D =R_H^{-1}\prec(H+\delta_w^2I)^{-1}. \] For every compatible residual $W'$ one has \[ \big\|(H+\delta_w^2I)^{-1/2}[H^{1/2}\ -W']\big\|\le1. \] Consequently $K=U\Theta$ and $P=(H+\delta_w^2I)^{-1}$ form a certificate, since $A'+B'K=D-W'\Theta$. This argument requires no rank assumption on $[X;U]$ and proves sufficiency for every nonempty noisy record. \par
Now assume $Y=AX+BU$, and let $Z=[X;U]$ and $\Pi=Z^\dagger Z$. Any common gain must have $[I;K]\in\operatorname{im}Z$: otherwise a rank-one perturbation of $[A\ B]$ that vanishes on $\operatorname{im}Z$ makes the closed-loop trace unbounded. Hence it has a representation $K=U\Theta$, $X\Theta=I$, $\Pi\Theta=\Theta$. Its compatible closed-loop family is precisely $(Y-\delta_w\Delta)\Theta$, $\|\Delta\|\le1$. For $\delta_w>0$, strict full-block elimination~\cite[Fact~1]{petersen2022}, with its positive multiplier absorbed into the Lyapunov matrix, makes existence of a certificate equivalent to some $H\succ0$ and such a right inverse satisfying \[ \Theta^\top(I+Y^\top H^{-1}Y)\Theta\prec(H+\delta_w^2I)^{-1}. \] For $\delta_w=0$, the same condition follows by scaling a Lyapunov matrix $P$ for the Schur matrix $Y\Theta$ until $P-(Y\Theta)^\top P(Y\Theta)\succ\Theta^\top\Theta$, and setting $H=P^{-1}$. For fixed $H$, weighted least squares minimizes the left side in the Loewner order at $R_H^{-1}$. Its minimizer lies in $\operatorname{row}Z$, since the rows of $X,Y$ lie there and $Q_H$ preserves that subspace. The resulting condition \mbox{$R_H\succ H+\delta_w^2I$} is exactly the strict margin LMI. A strict block at $t=\delta_w^2$ remains feasible for a slightly larger $t$. Thus the record is certifiable if and only if $\delta_w<\marginrec(X,Y)$, including failure at equality. \par
For the remaining properties, SDP duality gives \[ \marginrec(X,Y)^2= \min_{\Omega\in\mathcal S_n} \tr\!\left(\Omega\begin{bmatrix}X\\Y\end{bmatrix} \begin{bmatrix}X\\Y\end{bmatrix}^{\!\top}\right), \] where \[ \mathcal S_n=\left\{ \Omega=\begin{bmatrix}\Omega_{11}&\Omega_{12}\\\Omega_{12}^\top&\Omega_{22}\end{bmatrix}\succeq0: \tr\Omega_{11}=1,\ \Omega_{22}\preceq\Omega_{11}\right\}. \] This set is compact; $\Omega_{11}=I/n$, $\Omega_{22}=I/(2n)$, $\Omega_{12}=0$ is strictly dual feasible, so strong duality applies. Factoring $\Omega=[F_1;F_2][F_1;F_2]^\top$ gives $\|F_1\|_F=1$, $\|F_2\|_F\le1$; the square root of its objective is $\|F_1^\top X+F_2^\top Y\|_F$. The triangle inequality, followed by minimization, proves \eqref{eq:margin-lipschitz}. Finally the real part of the Hermitian rank-one matrix generated by $[q;-\overline\tau q]$ belongs to $\mathcal S_n$ and has objective $\|q^*(X-\tau Y)\|^2$. This proves \eqref{eq:margin-pencil}.\qed\par
\subsection{Proof of Lemma~\ref{lem:measured}}\label{app:measured} For a compatible pair, define \mbox{$\Delta=[A'\ B']-[\widehat A\ \widehat B]$} and put $P_U=I-U^\dagger U$. Its data satisfy \[ \widetilde Y-V'_+=A'(\widetilde X-V'_-)+B'U+W'. \] Projecting this equation onto the row space of $Z$ and using $\widehat Y=\widetilde YZ^\dagger Z$ gives \[ \Delta Z=(A'V'_- -W'-V'_+)Z^\dagger Z. \] Since $UP_U=0$ and $\widetilde XP_U$ has full row rank, \[ A'-\widehat A=\Delta ZP_U(\widetilde XP_U)^\dagger. \] Thus $\|A'\|\le\|\widehat A\|+\|\Delta Z\|/\sigma_\perp$. Both sensor blocks inherit the norm bound on their shared array $V'$. As $Z^\dagger Z$ is an orthogonal projector, these identities imply \[ \|\Delta Z\| \le\delta_w+(1+\|\widehat A\|)\delta_v +\frac{\delta_v}{\sigma_\perp}\|\Delta Z\|. \] Since $\delta_v<\sigma_\perp$, rearrangement proves \eqref{eq:measured-enclosure}. Its enclosing set is $\Sigma_\eta(\widetilde X,\widehat Y,U)$, which has the zero-residual explanation \mbox{$\widehat Y=\widehat A\widetilde X+\widehat B U$}. The zero-residual part of Lemma~\ref{lem:margin} therefore gives the equivalence \eqref{eq:measured-test} and a certificate. That lemma is algebraic: the fitted blocks need not themselves form a shifted state trajectory. Containment gives the claimed certificate for the original measured record.\qed\par
\subsection{Proof of Theorem~\ref{thm:competitive}: the short-experiment lower bound} \label{app:short-floor} Choose $LL^\top=\mathcal C\mathcal C^\top$ and set $\bar A=L^{-1}AL$, $\bar B=L^{-1}B$, and $\overline{\mathcal C}=L^{-1}\mathcal C$. These analysis coordinates are used only in the proof; the experimenter instead computes data-normalized coordinates from the observed record. Then $\|L^{-1}\|=1/\smin(\mathcal C)$ and $\overline{\mathcal C}\overline{\mathcal C}^\top=I$. Cayley--Hamilton gives \[ \begin{gathered} \bar A=\overline{\mathcal C}(J_A\otimes I_m) \overline{\mathcal C}^\top,\\ \|\bar A\|\le c_{n,M}:=\sqrt{n-1}+(1+M)^n, \quad\|\bar B\|\le1, \end{gathered} \] where $J_A$ is the companion matrix of the characteristic polynomial of $A$. The bound follows from the elementary symmetric functions of roots of modulus at most $M$. These bounds and the identity $\mathcal C(\bar A,\bar B)\mathcal C(\bar A,\bar B)^\top=I$ define a compact family consisting entirely of controllable pairs. \par
Put $\zeta=\max\{\delta_w,\delta_v\}/\smin(\mathcal C)$, taking $\delta_v=0$ for exact states, and let $\bar X^0$ be the error-free current-state block generated by $(\bar A,\bar B)$ and the pulse sequence at hand. At either fixed duration, summing the propagated errors bounds state perturbations in analysis coordinates by $O(\zeta)$ uniformly over this family; sensor perturbations add at most $\zeta$. For example, the current-state bound is $\zeta\sum_{j=0}^{T-2}c_{n,M}^j$, using temporal shifts of norm one. The error-free pulse ranks and compactness give uniform positive least singular values for $\bar X^0$, and for $[\bar X^0;U]$ in the measured case. These ranks persist for sufficiently small $\zeta$. \par
Let $S$ be the data factor of Section~\ref{sec:record-certificates} and put $R=L^{-1}S$. Since $RR^\top=(L^{-1}\widetilde X)(L^{-1}\widetilde X)^\top$, both $\|R\|$ and $\|R^{-1}\|$ are uniformly bounded. Thus the data-normalized coordinates $x_c=S^{-1}x$ differ from the analysis coordinates by a uniformly bounded transformation. This argument allows every admissible factor $S$ and requires no continuous choice of factor. Indeed, any sequence with $\zeta\to0$ has a subsequence on which $(\bar A,\bar B,R)$ converges. Its limit satisfies $RR^\top=\bar X^0(\bar X^0)^\top$, so the data-normalized records approach the compact family of error-free data-normalized records uniformly. \par
For exact states, those error-free records have positive margins by Theorem~\ref{thm:information}(1) and the exact part of Lemma~\ref{lem:margin}. The Lipschitz bound~\eqref{eq:margin-lipschitz} therefore bounds all sufficiently small perturbed-record margins uniformly away from zero. Meanwhile, \[ \delta_{w,c}=\|S^{-1}\|\delta_w \le\|R^{-1}\|\zeta=O(\zeta). \] The sufficient test $\delta_{w,c}<r_{\mathrm{rec}}(X_c,Y_c)$ consequently holds for every physical record when $\zeta$ is sufficiently small. This argument does not require full row rank of $[X_c;U]$. \par
For measured states, the error-free data-normalized regressors have full row rank with a uniform positive least singular value. This remains true for sufficiently small $\zeta$. Apply the fit \eqref{eq:conditional-fit} in the data-normalized coordinates. Least squares is continuous on this neighborhood, so the fitted records $(\widetilde X_c,\widehat Y_c)$ approach the error-free data-normalized records uniformly, and $\|\widehat A_c\|$ stays bounded. The error-free records have positive margins by Theorem~\ref{thm:information}(2) and Lemma~\ref{lem:margin}; continuity therefore gives a uniform positive lower bound on $r_{\mathrm{rec}}(\widetilde X_c,\widehat Y_c)$. The same compactness argument bounds $\sigma_\perp$, computed in the data-normalized coordinates, away from zero. Both transformed error bounds are $O(\zeta)$, so Lemma~\ref{lem:measured} gives an enclosure radius $O(\zeta)$. Its sufficient test consequently certifies every physical record when $\zeta$ is sufficiently small. \par
Both procedures use only observations and the stated bounds, and their certificates pull back to physical coordinates. Under \eqref{eq:budget-ray}, $\zeta=\eps\sqrt{T_{\mathrm{exact}}}/\smin(\mathcal C)$ for exact states and $\zeta=\max\{1,\kappa\}\eps\sqrt{T_{\mathrm{meas}}}/\smin(\mathcal C)$ for measured states. Absorbing these fixed durations into a positive constant \mbox{$\alpha(n,m,M)$} proves the lower bounds. \qed\par
\subsection{Proof of Corollary~\ref{thm:stable}}\label{app:stable} We prove a slightly sharper bound. Define \begin{equation}\label{eq:stable-coefficient} \begin{gathered} a_{d,k}(r)=\sum_{j=k}^{d-1}\binom jk r^{j-k}(1-r)^{d-1-j},\\ R_d(r)=\Bigl(\sum\nolimits_{k=0}^{d-1}a_{d,k}(r)^2\Bigr)^{1/2}. \end{gathered} \end{equation} The proof establishes both bounds in the chain \begin{equation}\label{eq:stable-sharper} \begin{split} \Gamma(A,B)&\le\frac{\sqrt m\,R_d(r)}{(1-r)^{d-1}}\smin(\mathcal C_d)\\ &\le\frac{\sqrt{m\binom{2d-2}{d-1}}}{(1-r)^{d-1}}\smin(\mathcal C_d). \end{split} \end{equation} \par
\par
For a monic polynomial $p(z)=\sum_{j=0}^\ell c_jz^j$, define \[ \phi_{p,k}(z)=\sum_{j=k+1}^\ell c_jz^{j-1-k},\quad\phi_p=(\phi_{p,0},\ldots,\phi_{p,\ell-1})^\top. \] For $p=p_A$, abbreviate these to $\phi_k$ and $\phi$. Polynomial division and $p_A(A)=0$ give \begin{equation} (zI-A)^{-1}B=\frac{\sum_{k=0}^{d-1}\phi_k(z)A^kB}{p_A(z)}. \label{eq:app-transfer} \end{equation} Choose a real unit vector $q$ with \mbox{$\|q^\top\mathcal C_d\|=\smin(\mathcal C_d)$}. On the error-free trajectory of any causal policy, $\|U\|_F\le\sqrt{mT}$. Extend this input by zeros. Stability and Parseval's identity bound the norm of the first $T$ samples of $q^\top(x_t-\tau x_{t+1})$ by \[ \sqrt{mT}\,\smin(\mathcal C_d)\max_{|z|=1} \frac{|1-\tau z|\,\|\phi(z)\|}{|p_A(z)|}. \] By \eqref{eq:margin-pencil}, every certified level is bounded by this expression divided by $\sqrt T$. This obstruction applies for complex $|\tau|\le1$, and zero sensor error makes it an obstruction for every $\kappa$. \par
Write $p_A(z)=\prod_{j=1}^d(z-\lambda_j)$ and choose $\tau=\overline\lambda_1$. The vector $\phi(z)$ contains the coefficients in $w$ of the divided difference in the telescoping identity \[ \frac{p_A(z)-p_A(w)}{z-w} =\sum_{j=1}^d\prod_{\ell<j}(w-\lambda_\ell) \prod_{\ell>j}(z-\lambda_\ell). \] Cancelling $|1-\overline\lambda_1z|=|z-\lambda_1|$ on $|z|=1$ reduces the bound to the coefficient norm of \[ 1+\sum_{j=2}^d \frac{\prod_{\ell<j}(w-\lambda_\ell)} {\prod_{\ell=2}^{j}(z-\lambda_\ell)}. \] Its $w^k$ coefficient has modulus at most \[ \sum_{j=k}^{d-1}\binom jk r^{j-k}(1-r)^{-j} =\frac{a_{d,k}(r)}{(1-r)^{d-1}}, \] with $a_{d,k}(r)$ as in \eqref{eq:stable-coefficient}. Taking the Euclidean coefficient norm proves the $R_d$ bound. For the binomial estimate, \[ a_{d,k}(r)\le\binom{d-1}{k} \sum_{j=k}^{d-1}r^{j-k}(1-r)^{d-1-j} \le\binom{d-1}{k}. \] The last sum is bounded by $(r+1-r)^{d-1-k}$ after inserting its binomial coefficients. Hence $R_d^2\le\sum_k\binom{d-1}{k}^2=\binom{2d-2}{d-1}$. Taking suprema over certified levels, durations, and causal policies proves the result. Since $\smin(\mathcal C_d)\le\smin(\mathcal C)$, the competitive fraction follows. \qed\par
\par
\par
\subsection{Proof of Theorem~\ref{thm:competitive}: the oracle ceiling} \label{app:boundary}\label{app:spectral} The lower bounds of Theorem~\ref{thm:competitive}(i) were proved in \appref{app:short-floor}. Fix a causal policy and a certified level $\eps>0$, and use its error-free record. Its exact-state compatibility set must admit a certificate. \par
\emph{Modal bound.} First let $v^*$ be a left eigenvector, $v^*A=\lambda v^*$, and set \[ s=v^*X,\quad y=v^*Y,\quad f=v^*BU. \] Then $y=\lambda s+f$. Since $x_0=0$, $s$ is $y$ delayed by one step, so $\|s\|\le\|y\|$. If $|\lambda|\le1$, choose $\tau=\overline\lambda$ and use \[ \|s-\overline\lambda y\|^2 +(1-|\lambda|^2)(\|y\|^2-\|s\|^2)=\|f\|^2. \] If $|\lambda|\ge1$, choose $\tau=\lambda^{-1}$, so that $s-\tau y=-f/\lambda$. Both choices satisfy $|\tau|\le1$. Since $\|f\|\le\sqrt{mT}\|v^*B\|$, the record obstruction gives \begin{equation} \eps\|v\|\le\frac{\sqrt m\,\|v^*B\|}{\max\{1,|\lambda|\}}. \label{eq:app-modal} \end{equation} If $d=1$, then $A$ is a scalar multiple of $I$, so every row is modal. Applying this estimate to a unit vector $v$ with \mbox{$\|v^*B\|=\smin(B)$} proves the upper bound, since $\smin(B)\le\smin(\mathcal C)$. \par
\emph{Spectral grouping.} For $d\ge2$, set \[ \theta=\frac14\min\{1,b_0/(1+M)^{n-2}\}. \] Every root with $||\lambda|-1|\le\theta$ is simple in $p_A$ and is separated from all other roots by at least $2\theta$. Otherwise, at $z=\lambda/|\lambda|$ the second-smallest root distance is less than $3\theta$, and \[ \prod_{j=2}^{d}\varrho_j(z) <3\theta(1+M)^{d-2}\le3b_0/4, \] contrary to the boundary-margin assumption. \par
Factor $p_A=p_s p_u\prod_i(z-\lambda_i)$, grouping roots with modulus below $1-\theta$, above $1+\theta$, and the remaining simple roots, respectively. Omit empty factors. Different groups are separated by at least $2\theta$. For each factor $p_j$, let $\psi_j$ be the polynomial of degree less than $d$ that is one modulo $p_j$ and zero modulo the other factors, as supplied by the Chinese remainder theorem. Let $R_j$ contain the coefficient rows of $z^k\psi_j(z)\bmod p_A(z)$ for $0\le k<\deg p_j$. \par
\emph{Groupwise estimates.} For the stable and unstable factors, every root has distance at least $\theta$ from the unit circle. Thus $|p_j(z)|\ge\theta^{\deg p_j}$ when $|z|=1$. Their degrees and coefficients are bounded in terms of $n,M$. Hence there is a constant \mbox{$\beta_0\ge1$}, depending only on $n,M,\theta$, such that \[ \max_{|z|=1}\frac{\|\phi_{p_j}(z)\|}{|p_j(z)|}\le\beta_0 \qquad(p_j=p_s\ \text{or}\ p_u). \] We claim that every row $v^*=v^*\psi_j(A)$ satisfies \begin{equation} \eps\|v\|\le\sqrt m\,\beta_0 \Bigl(\sum\nolimits_{k<\deg p_j}\|v^*A^kB\|^2\Bigr)^{1/2}. \label{eq:app-group} \end{equation} For a stable group, $v^*p_s(A)=0$, so the transfer identity \eqref{eq:app-transfer}, Parseval's inequality, and the record obstruction with $\tau=0$ give this estimate. Only the projected response needs to be stable. For a one-root group, the estimate follows from \eqref{eq:app-modal} and \mbox{$\beta_0\ge1$}. \par
For the unstable group, put $r_B=\eps/\sqrt m$. The error-free compatibility set contains $(A,B+\Delta B)$ for every real $\|\Delta B\|\le r_B$, because $\|\Delta B U\|\le\eps\sqrt T$. Let $(K,P)$ be its certificate and put $D=A+BK$. The argument at the end of this proof extends the same certificate to complex perturbations in this ball. Consequently, \begin{equation} \sup_{|z|=1}\|K(zI-D)^{-1}\|<1/r_B. \label{eq:app-actuator-gain} \end{equation} Indeed, if this bound fails at some $z$, choose a unit maximizing vector $x$, put $y=K(zI-D)^{-1}x$, and set $\Delta B=xy^*/\|y\|^2$. Then $\|\Delta B\|\le r_B$, while $D+\Delta BK$ has an eigenvalue at $z$, a contradiction. \par
For an unstable row, the resolvent identity gives \[ v^*=\frac1{2\pi i}\oint_{|z|=1} v^*(zI-A)^{-1}BK(zI-D)^{-1}\,dz. \] Here the projected row resolvent is continued after cancelling the other polynomial factors; any unit-circle poles of $A$ are removable in this row. The integrand equals \[ v^*\bigl((zI-D)^{-1}-(zI-A)^{-1}\bigr). \] Its first term integrates to $v^*$ because $D$ is Schur, and its second integrates to zero because the projected unstable resolvent is analytic inside the circle. Apply the transfer identity with $p_u$, the bound defining $\beta_0$, and \eqref{eq:app-actuator-gain} to the integral. The contour has length $2\pi$, so this proves \eqref{eq:app-group}. \par
\emph{Recombination.} Choose a real unit left singular vector $q$ of $\mathcal C_d$ for $\smin(\mathcal C_d)$, and let $H_q$ have rows $q^\top A^kB$, $0\le k<d$. For each group put \mbox{$q_j^*=q^\top\psi_j(A)$}. Polynomial congruences give $q^\top=\sum_jq_j^*$ and $[q_j^*A^kB]_{k<\deg p_j}=R_jH_q$. Hence \[ \begin{aligned} \eps\le\eps\sum_j\|q_j\| &\le\sqrt m\,\beta_0\sum_j\|R_jH_q\|_F\\ &\le\sqrt m\,\beta_0\,\smin(\mathcal C_d)\sum_j\|R_j\|. \end{aligned} \] The Chinese remainder coefficient systems are uniformly invertible. On the closure of each fixed-degree family, the roots remain bounded and different groups remain separated by $2\theta$. The inverse coefficient maps are therefore continuous on a compact set. Collisions within one group are allowed, and there are only finitely many degree partitions. This bounds the displayed coefficient by $\beta(n,M,b_0)$ without bounding any spectral projector in the original coordinates. Use $\smin(\mathcal C_d)\le\smin(\mathcal C)$, take the supremum over certified levels and causal experiments, and combine with the short-experiment lower bounds. \par
\emph{Complex perturbations.} It remains to justify the use of complex actuator perturbations. For the real $D,K,P$ and radius $r_B$ above, define, on real vectors, \[ \begin{split} f(x,w)&=(Dx+w)^\top P(Dx+w)-x^\top Px,\\ g(x,w)&=r_B^2\|Kx\|^2-\|w\|^2. \end{split} \] The condition $g\ge0$ means $w=EKx$ for some real $\|E\|\le r_B$. The real certificate and compactness on $\|x\|=1$ give $f\le-\alpha_*\|x\|^2$ on this constraint for some $\alpha_*>0$. If $K\ne0$, there is an $x$ with $Kx\ne0$, and $w=0$ makes $g>0$. The scalar S-lemma~\cite{polik2007} therefore gives $t\ge0$ such that $f+(\alpha_*/2)\|x\|^2+tg\le0$ for all real $x,w$. Its real symmetric matrix inequality also holds on complex vectors. Substituting $w=EKx$ then proves the required strict Hermitian inequality for every complex $\|E\|\le r_B$. If $K=0$, the conclusion is immediate. \qed\par
\subsection{Proof of Theorem~\ref{thm:duration}}\label{app:duration} \emph{Finite controllability.} The expansion \[ A_h^jB = (I+hG)^jB=\sum_{k=0}^j\binom jk h^kG^kB \] relates the blocks $B,A_hB,\ldots,A_h^{\nu-1}B$ to the full-row-rank matrix $[B\ GB\ \cdots\ G^{\nu-1}B]$ by a fixed upper-triangular matrix of binomial coefficients, with diagonal entries one, and the weights $1,h,\ldots,h^{\nu-1}$. Thus $\smin(\mathcal C)\ge\alpha_0h^{\nu-1}$. A real unit vector $q$ with $q^\top G^jB=0$ for $0\le j\le\nu-2$ has $q^\top A_h^jB=O(h^{\nu-1})$ for $j<n$, proving the upper estimate. \par
\emph{Causal obstruction.} Generate the true record with the disturbance of Section~\ref{sec:duration} and zero sensor error. Then $q^\top x_t=0$, and the same record is error-free for the unstabilizable alternative \eqref{eq:slow-alternative}. With $P_q=I-qq^\top$, the identities $(P_qG)^jB=G^jB$ for $j\le\nu-2$ give $q^\top G(P_qG)^jB=0$ for $j\le\nu-3$. More explicitly, the disturbed trajectory satisfies \[ x_t=\sum_{r=0}^{t-1}(I+hP_qG)^rB u_{t-1-r}. \] Expand its matrix powers and use $\sum_{r=j}^{t-1}\binom rj=\binom t{j+1}$. The vanished lower-order terms then give \[ \begin{split} \max_{t<T}\|w_t\| &\le\sqrt m\,\|B\|\sum_{\ell\ge\nu-1} \binom T\ell(h\|G\|)^\ell\\ &\le\frac{\sqrt m\,\|B\|}{(\nu-1)!} e^{hT\|G\|}(hT\|G\|)^{\nu-1}. \end{split} \] All columns lie in $\operatorname{span}\{q\}$, so $\|W\|\le\sqrt T\max_t\|w_t\|$. For $hT\le1$ this proves the claimed obstruction. For \mbox{$hT>1$}, the input-cancellation disturbance $w_t=-Bu_t$ keeps $x_t=0$, costs at most $\sqrt{mT}\|B\|$, and leaves $(I,0)$ compatible. This supplies the uniform constant ceiling in \eqref{eq:slow-causal}. \par
\emph{Attainment.} Let $N=\lceil1/h\rceil$ and $T=mN$, and hold each coordinate input for $N$ steps. Keep separate process and sensor error levels: $\delta_w=\eps_w\sqrt T$ and $\delta_v=\eps_v\sqrt T$. For the error-free record, put \mbox{$Z_h^0=[hX^0;U]$}. Euler convergence on the bounded rescaled interval and Riemann sums give \[ hZ_h^0(Z_h^0)^\top\longrightarrow\int_0^m[y(s);u(s)][y(s);u(s)]^\top\,ds\succ0, \] where $\dot y=Gy+Bu$, $y(0)=0$, and $u=e_j$ on the $j$th unit interval. If the quadratic integral vanishes at $(a,c)$, then $a^\top y+c^\top u=0$ almost everywhere. Analyticity makes this an identity on each interval. At the first interval's start \mbox{$c_1=0$}, and differentiation gives $a^\top G^kBe_1=0$ for every $k$. Its response is therefore annihilated on later intervals. Induction yields \mbox{$c_j=0$} and $a^\top G^kBe_j=0$ for all $j,k$, whence \mbox{$a=c=0$} by controllability. Thus $\sigma_{\min}(Z_h^0)\ge\alpha_1h^{-1/2}$. \par
Finite convolution gives \[ \|h(X-X^0)\|\le h\sum_{j<T}\|A_h^j\|\delta_w\le\beta_0\delta_w. \] Here shifts of the disturbance array have operator norm at most $\delta_w$, and $hT$ is bounded, so $\beta_0$ is independent of small $h$. Every compatible full underlying trajectory differs from the true one by at most $2\delta_v$. For sufficiently small $\eps_w+\eps_v$, therefore $\sigma_{\min}[hX';U]\ge\alpha_2h^{-1/2}$ uniformly over all compatible alternatives. With $G'=(A'-I)/h$ and $E_X=X'-X$ and $E_Y=Y'-Y$, subtraction gives \[ [G'-G\ \ B'-B][hX';U] =E_Y-E_X-hGE_X+W-W'. \] Consequently, \[ \|[G'-G\ \ B'-B]\| \le\frac{2\delta_w+(4+2h\|G\|)\delta_v}{\alpha_2h^{-1/2}} \le\beta_1(\eps_w+\eps_v). \] Choose a continuous-time stabilizing gain $K_{\mathrm{ct}}$ and Lyapunov matrix $P\succ0$ for $(G,B)$. If $\eps_w+\eps_v$ is sufficiently small, all $M'=G'+B'K_{\mathrm{ct}}$ obey one strict continuous-time Lyapunov inequality and remain bounded. Since \[ \begin{split} &(I+hM')^\top P(I+hM')-P\\ &\quad=h(M'^\top P+PM')+h^2M'^\top PM', \end{split} \] the gain $hK_{\mathrm{ct}}$ certifies every alternative for all sufficiently small $h$. Along $(\eps_w,\eps_v)=(\eps,\kappa\eps)$ this proves a constant lower tolerance. \par
This certificate can also be reconstructed from data and bounds. Put $h_N=1/N$, so $h/h_N=hN\to1$, and let $\widetilde Z_N=[h_N\widetilde X;U]$, $s=\sigma_{\min}(\widetilde Z_N)$, and $[\widehat G\ \widehat B]=(\widetilde Y-\widetilde X)\widetilde Z_N^\dagger$. The least-squares residual annihilates $\widetilde Z_N^\dagger$. The same subtraction argument therefore encloses every scaled compatible pair $((A'-I)/h_N,B')$ in the ball about $(\widehat G,\widehat B)$ of radius \[ \eta_h= \frac{\delta_w+(2+h_N\|\widehat G\|)\delta_v} {s-h_N\delta_v}. \] Its denominator is positive for small noise, $\eta_h=O(\eps_w+\eps_v)$, and its center is within $O(\eps_w+\eps_v+h)$ of $(G,B)$. Apply Lemma~\ref{lem:margin} to the algebraic coefficient-ball record $X_a=[I\ 0]$, $U_a=[0\ I]$, with successor matrix $Y_a=[I+h_N\widehat G\ \ h_N\widehat B]$ and radius $h_N\eta_h$. The synthetic set contains $(A',h_NB')$ for each physical compatible pair $(A',B')$. Thus its returned gain $K_a$ gives the physical gain $K=h_NK_a$, since $A'+(h_NB')K_a=A'+B'(h_NK_a)$. The preceding Lyapunov argument ensures a certificate uniformly in the stated regime. No true plant parameter is an input to this step. \par
The attaining sequence and the input-cancellation bound imply $\Gamma\asymp1$. Applying \eqref{eq:slow-causal} at the two fixed shortest durations and using the lower bounds of Theorem~\ref{thm:competitive}(i) gives $E_{\mathrm{exact}},E_{\mathrm{meas}}^\kappa\asymp h^{\nu-1}$. The duration conclusion and the stated stable-gap sharpness follow. \qed\par
\raggedbottom\section*{Acknowledgment} The authors acknowledge the use of GPT-6 Astra and Claude 5.5 during the preparation of this manuscript. Their use was integrated into the iterative drafting and revision process, making individual contributions difficult to isolate. All content was reviewed by the authors, who take full responsibility for the final manuscript. \par
 \par
\par
\par
\noindent\begin{minipage}{\columnwidth} \vspace{-2\baselineskip} \begin{IEEEbiography}[{\includegraphics[trim=14.4bp 0bp 14.4bp 0bp,clip,width=1in,height=1.25in]{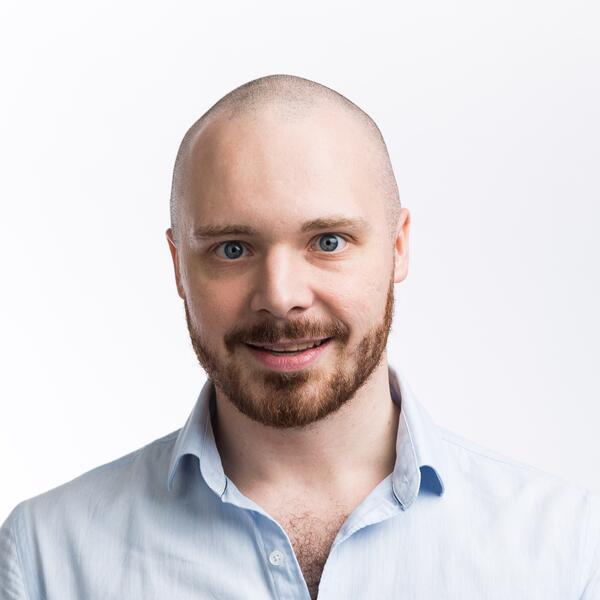}}]{Alexey Peregudin} received the B.Sc., M.Sc., and Ph.D. degrees from ITMO University, St. Petersburg, Russia, in 2015, 2017, and 2021, respectively. From 2017, he held academic and research positions at ITMO University, including Assistant Professor, Associate Professor, and Senior Researcher. Since 2023, he has been a University Teacher with the School of Electrical and Electronic Engineering, University of Sheffield, Sheffield, U.K. His interests include robust and optimal control, invariant-set methods, optimization, and data-driven control. \end{IEEEbiography} \par
\vspace{-2\baselineskip} \begin{IEEEbiography}[{\includegraphics[trim=0bp 212.5bp 0bp 0bp,clip,width=1in,height=1.25in]{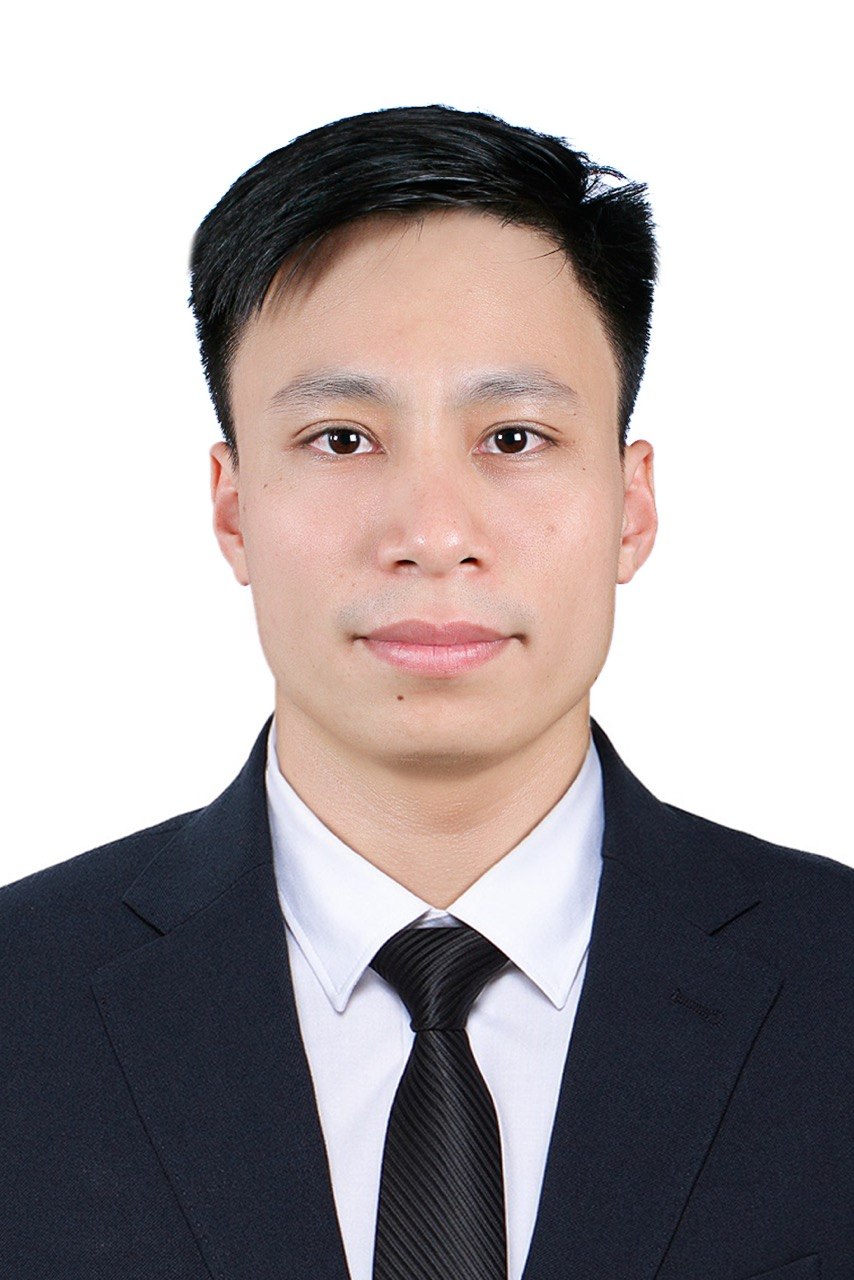}}]{Ngoc Tuan Dinh} received the B.Sc. and M.Sc. degrees in mechatronics and robotics from ITMO University, Saint Petersburg, Russia, in 2024 and 2026, respectively. Currently he is pursuing the Ph.D. degree at ITMO University. His research interests include data-driven control, robust and adaptive control, system identification, and analysis and control of uncertain dynamical systems. \end{IEEEbiography} \end{minipage} \end{document}